\newcounter{magicrownumbers}
\DeclareMathOperator*{\argmin}{argmin}
\DeclareMathOperator*{\argmax}{argmax}
\newcommand{\set}[1]{\mathcal{#1}}
\newcommand{\rv}[1]{\mathsf{#1}}
\newcommand{\map}[1]{\mathsf{#1}}
\newcommand\ie{\textit{i.e.}}
\newcommand\eg{\textit{e.g.}}
\newtheoremstyle{thm_custom}
{3pt} % 上方空间
{3pt} % 下方空间
{\itshape} % 正文字体
{} % 缩进
{\itshape} % 定理头部字体
{:} % 定理头部后标点
{.5em} % 标题后额外空间
{\thmname{#1}\thmnumber{\textit{ #2}}\thmnote{ (#3)}}   % 定理头部格式
\newtheoremstyle{def_custom}
{3pt} % 上方空间
{3pt} % 下方空间
{} % 正文字体
{} % 缩进
{\itshape} % 定理头部字体
{:} % 定理头部后标点
{.5em} % 标题后额外空间
{\thmname{#1}\thmnumber{\textit{ #2}}\thmnote{ (#3)}}   % 定理头部格式
\theoremstyle{thm_custom}
\newtheorem{theorem}{Theorem}
\newtheorem{lemma}{Lemma}
\newtheorem{corollary}{Corollary}
\newtheorem{proposition}{Proposition}
\theoremstyle{def_custom}
\newtheorem{definition}{Definition}
\theoremstyle{remark}
\newtheorem{remark}{Remark}
\begin{document}
\title{Integrated Sensing and Communication: Rate-Distortion Fundamental Limits of State Estimator \\
\thanks{}
}

%% Author& Thanks & Markboth
\author{
Lugaoze Feng$^{\orcidlink{0009-0000-4014-4154}}$,
Guocheng Lv$^{\orcidlink{0000-0002-7136-3402}}$,
Xunan Li$^{\orcidlink{0000-0002-5740-161X}}$, and
Ye jin
\thanks{}
\thanks{Lugaoze Feng, Guocheng Lv and Ye Jin are with the State Key Laboratory of Photonics and Communications, Peking University, Beijing 100871, China (e-mail: lgzf@stu.pku.edu.cn; lv.guocheng@pku.edu.cn; jinye@pku.edu.cn).}
\thanks{Xunan Li is with the National Computer Network Emergency Response Technical Team/Coordination Center of China, Beijing 100029, China (e-mail: lixunan@cert.org.cn).}}

\markboth
{ }
{ }

\maketitle

\begin{abstract}
	The state-dependent memoryless channel (SDMC) is employed to model the integrated sensing and communication (ISAC) system, where the transmitter conveys messages to the receiver while simultaneously estimating the state parameter of interest via the received echo signals. However, the performance of sensing has often been neglected in existing works. To address this gap, we establish the rate-distortion function for sensing performance in the SDMC model, which is defined based on standard information-theoretic principles to ensure clear operational meaning. In addition, we propose a modified Blahut-Arimoto type algorithm for solving the rate-distortion function and provide convergence proofs for the algorithm. We further define the capacity-rate-distortion tradeoff region, which unifies information-theoretic results for communication and sensing within a single optimization framework. Finally, we numerically evaluate the capacity-rate-distortion region and demonstrate the benefit of coding in terms of estimation rate for certain channels.
\end{abstract}

\begin{IEEEkeywords}
Integrated sensing and communication, rate-distortion theory.
\end{IEEEkeywords}
	
\section{Introduction}
\IEEEPARstart{F}{uture} communication networks are poised to integrate sensing and communication functionalities into unified systems \cite{liu2022survey}, \cite{liu2022integrated}, \cite{wang_2022_joint}. For instance, autonomous vehicles acquire real-time environmental data through integrated networks, which facilitates navigation and helps avoid traffic congestion \cite{zheng2015reliable}. This paradigm, termed \textit{Integrated Sensing and Communication} (ISAC), leverages shared spectrum and hardware so that the same radio signals are used both to transmit data and to sense the environment, thereby enabling simultaneous message delivery and receiver-state estimation. 

To elucidate the tradeoff between sensing and communication performance in ISAC systems, several previous studies have made progress. For example, the ``estimation entropy'' is proposed to quantify the sensing estimation performance and communication rate \cite{bliss_cooperative_2014}. Additionally, the Cramer-Rao bound (CRB) serves as a performance metric for sensing, evaluating the tradeoff between mean square error (MSE) and communication rate, as discussed in \cite{xiong_fundamental_2023}. More recently, an ISAC system model, called the state-dependent memoryless channel (SDMC) model, was introduced in \cite{kobayashi2018joint}. This model features a memoryless channel with independent and identically distributed (i.i.d.) state sequences, assuming the receiver has perfect knowledge of the channel state and strictly causal feedback.  In this framework, the author considers the \textit{capacity-distortion tradeoff} as a performance measure to balance the maximizing communication rate and target distortion. The capacity-distortion tradeoff is extended to the continuous channels in \cite{li_computation_2025}. Additionally, this model is extended to the broadcast channel and the multiple access channel scenario in \cite{ahmadipour_information-theoretic_2024} and \cite{liu_fundamental_2025}, respectively. However, the performance measure of the estimator in these works is ignored. Several recent works have attempted to address this issue, including the radar estimation information rate \cite{chiriyath_2016_inner} and the sensing estimation rate \cite{dong_rethinking_2023}. Nevertheless, the operational meaning of these definitions remains insufficiently characterized. 

In the sensing system, the parameters we are interested in include the radar waveform $\rv{X}$, the states corresponding to the target parameter $\rv{S}$, and the noise observation $\rv{T}$, \eg, see \cite{bliss_cooperative_2014}, \cite{bell_information_1993}, \cite{tang_spectrally_2019}, \cite{yang_mimo_2007}, \cite{li_2008_range_compre}, \cite{zhao_2008_iter}, \cite{dong_rethinking_2023}. This system can be characterized by the Markov chain $(\rv{X},\rv{S}) \rightarrow \rv{T}$. As a result, the radar estimator observes a corrupted state sequence. This situation is common in literature related to radar waveform design, and studies using mutual information for radar waveform design have demonstrated superior performance \cite{bell_information_1993}, \cite{tang_spectrally_2019}, \cite{yang_mimo_2007}. When the waveform $\rv{X}$ is given, maximizing the conditional mutual information $I(\rv{S};\rv{T}|\rv{X})$ between the observation $\rv{T}$ and the target parameter $\rv{S}$ leads to the same water-filling solution as minimizing the minimum MSE (MMSE) \cite{yang_mimo_2007}. 

In the context of ISAC systems, we aim to determine, in the sense of information theory, whether the waveform design that maximizes the conditional mutual information $I(\rv{S};\rv{T}|\rv{X})$ mentioned above is optimal. The rate-distortion theory characterizes the minimal source coding rate compatible with a given target average distortion. Indeed, Berger mentions this idea in \cite[p. 9]{berger2003rate} that: ``Rate-distortion theory provides knowledge about how the frequency of faulty categorization will vary with the number and quality of observations''. 
For state sequences corresponding to the parameter of interest, the estimator is reconstructed by obtaining the noise-corrupted sequence while determining the minimum achievable rate that satisfies the target distortion.  The codeword acts as side information in the estimator and acts as noise affecting the state sequence at the receiver, as illustrated in Figure \ref{sys}. Previous information-theoretical work has focused on the noisy source setting without side information \cite{dobrushin_information_1962},\cite{sakrison_source_1968},\cite{ wolf_transmission_1970},\cite{ayanoglu_optimal_1990},\cite{ephraim_unified_nodate}, \cite{kostina_nonasymptotic_2016-1}. 
These works imply that, in the limit of infinite blocklengths, the problem is equivalent to a conventional (noiseless) lossy source coding problem with the surrogate distortion measure. On the other hand, the source coding with side information at the source decoder in the lossy case is studied by Wyner and Ziv's seminal work \cite{wyner2003rate}. The side information at multiple receivers is further studied in \cite{heegard2003rate}. More recent work studied a model in which the side information is corrupted by noise and is available only at the decoder, differing from the ISAC model \cite{feng_functional_nodate}. In the coding theory literature, high-rate quantization and transform coding with side information available at the decoder for a noisy source is established in \cite{rebollo-monedero_wyner-ziv_2004}. Nevertheless, it does not contain information-theoretic results applicable to the ISAC model.
 
 \begin{figure*}[t]
 	\centering
 	\normalsize
 	\includegraphics[width = 0.7\textwidth]{./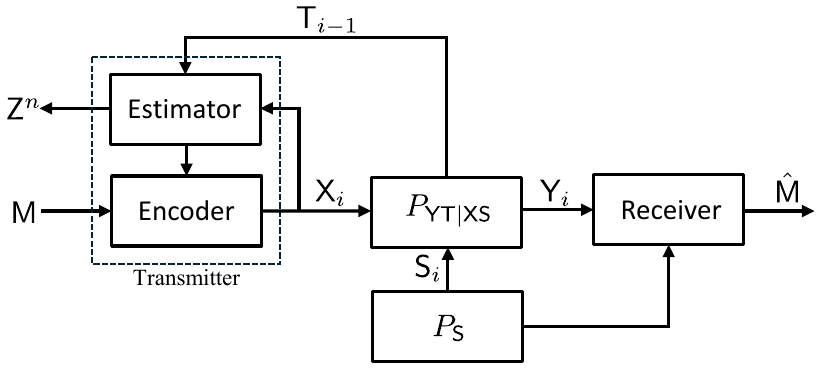}
 	\caption{\ State-dependent memoryless channel model.}
 	\label{sys}
 \end{figure*}

In light of these facts, we initiate an information-theoretic study of rate-distortion theory between the state and its estimation in ISAC systems. The main contributions of this paper are as follows.
\begin{enumerate}
	\item[$\bullet$] We first established the information-theoretic result in the SDMC model, which provides the relationship between the estimation rate of the estimator and the average distortion of the target. The proof of this conclusion is inspired by \cite{ayanoglu_optimal_1990} and is based on the optimal estimator function.
	
	\item[$\bullet$] A modified Blahut-Arimoto algorithm \cite{blahut_computation_1972,csiszar_computation_1974} is proposed to compute the rate-distortion function numerically. We proved the optimality of this alternating optimization and demonstrated that it does indeed converge to the rate-distortion function.
	
	\item[$\bullet$] We give some numerical evaluations to illustrate our results. In particular, in the common monostatic-downlink ISAC model \cite{li_computation_2025}, we observe that maximizing mutual information $I(\rv{S};\rv{T}|\rv{X})$ to design the waveform is not the optimal choice in some cases. This provides brand-new insights into the waveform design of the transmitter in ISAC systems.
\end{enumerate}

\subsection{Notation}
We use $[n] = \{ 1,...,n \}$, $\mathbb{Z}_{\ge 0} = \{ 0, 1,... \}$ to represent integer intervals. Let $1\{ \cdot \}$ denote the indicator function. For given $\set{X}$ and random variable $\rv{X} \in \set{X}$, we write $\rv{X} \sim P_{\rv{X}}$ to indicate that the random variable $\rv{X}$ follows the distribution $P_{\rv{X}}$. Let $\rv{X}^n = (\rv{X}_1,...,\rv{X}_n)$ and $x^n = (x_1,...,x_n)$ denote the random vector and its realization in the $n$-th Cartesian product $\mathcal{X}^n$, respectively. 
The probability and mathematical expectation are denoted by $\mathbb{P}[\cdot]$ and $\mathbb{E}[\cdot]$, respectively.

\section{Preliminaries}

\subsection{System Model}

Consider the SDMC model depicted in Figure \ref{sys}. In this setup, the transmitter sends the channel input codeword $\rv{X}^n$, while the state sequence $\rv{S}^n$ (drawn i.i.d. from $P_{\rv{S}}$) also acts as an input to the channel. These two sequences, $\rv{X}^n$ and $\rv{S}^n$, together pass through the channel $P_{\rv{Y} \rv{T}|\rv{X}\rv{S}}$, resulting in the outputs $\rv{Y}^n$ and $\rv{T}^n$. The state sequence $\rv{S}^n$ is perfectly known to the receiver. The transmitter aims to reliably transmit messages over the codeword $\rv{X}^n$ and to estimate $\rv{S}^n$ based on $\rv{X}^n$ and $\rv{T}^n$. The receiver reconstructs the message using $\rv{Y}^n$ and $\rv{S}^n$. 

\subsection{Code}

In the ISAC problem, an important constraint is the distortion between the source $\rv{S}^n$ and $\rv{Z}^n$. The state estimator function is $\map{h}: \set{X}^n \times \set{T}^n \mapsto \set{Z}^n $, where $i=1,...,n$. The distortion measure $\map{d}: \set{S}^n \times \set{Z}^n \mapsto [0, +\infty] $ is used to quantify the performance of the source estimator. In the spirit of the noisy source coding problem, we decompose the estimator into a pair of random mappings $\map{h}_{a}^{(K)}: \set{X}^n \times \set{T}^n \mapsto \rv{W} $ and $\map{h}_b^{(K)}: \rv{W} \mapsto \set{Z}^n$, where $\rv{W} \in \{ 1,...,K \}$. Then, Definition \ref{definition_estimator} below comes into play.
\begin{definition}\label{definition_estimator}
	The estimator $\map{h} = \map{h}_a^{(K)} \circ \map{h}_b^{(K)}$ is a $(n,K,D)$-estimator for 
	\begin{equation}
		\left\{ \set{S}^n,\set{X}^n,\set{Z}^n,  \set{T}^n, P_{\rv{S}^n}, P_{\rv{X}^n}, P_{\rv{T}^n|\rv{X^n}\rv{S^n}},\map{d} \right\} \nonumber
	\end{equation} 
	if $\lim \limits_{n \rightarrow \infty} \mathbb{E}[\map{d}(\rv{S}^n, \rv{Z}^n)]$.
\end{definition}
We adopt the standard definition of information theory. The estimation rate-distortion function of the estimator is defined as
\begin{align}
	R(D) &= \lim \limits_{n \rightarrow \infty} \frac{1}{n} \log K^\star(n,D), 
\end{align}
where
$
K^\star (n,D)  = \min \{ K: \exists(n,K,D)\text{-estimator} \}.
$

A capacity-rate-distortion code is a channel encoder function $\map{f}_i: \set{M} \times \set{T}^{i-1} \mapsto \set{X}$, a channel decoder function $\map{g}: \set{S}^n\times\set{Y}^n \mapsto \set{M} \cup \{ e \}$, where $\set{M} = \{1,...,M\}$ and  $i=1,...,n$. A cost function $\map{c}: \set{X} \mapsto [0,+\infty]$ may be imposed on the channel inputs. A distortion measure and an estimator, as defined above.

\begin{definition} \label{definition_code}
	The tuple $(\map{f},\map{g},\map{h})$ is a capacity-rate-distortion $(n,M,K,D)$-code for
	\begin{equation}
		\left\{ \set{M}, \set{S}^n,\set{X}^n, \set{T}^n, \set{Y}^n, \hat{\set{M}}, P_{\rv{S}^n}, P_{\rv{Y^n}\rv{T^n}|\rv{X^n}\rv{S^n}},\map{d},\map{c}\right\} \nonumber
	\end{equation}
	if
	\begin{align}
		\lim \limits_{n \rightarrow \infty} \mathbb{P}[\hat{\rv{M}} \neq \hat{\rv{M}} ] & = 0, \label{definition_avg_error} \\
		\lim \limits_{n \rightarrow \infty} \mathbb{E}[\map{d}(\rv{S}^n, \rv{Z}^n)] &\le D \label{definition_avg_distorition}\\
		\lim \limits_{n \rightarrow \infty} \frac{1}{n} \sum_{i=1}^{n} \mathbb{E}[\map{b}(\rv{X}_i)] & \le B. \label{definition_avg_cost}
	\end{align}
\end{definition}

\begin{definition}
	The capacity-rate-distortion tuple $(C,R,D)$ is achievable of $(n,M,K,D)$-code that simultaneously satisfy (\ref{definition_avg_error})-(\ref{definition_avg_cost}), where
	\begin{equation}
		C = \lim \limits_{n \rightarrow \infty}  \frac{1}{n} \log M, \ R = \lim \limits_{n \rightarrow \infty}  \frac{1}{n} \log K.
	\end{equation}
\end{definition}

\begin{definition}
	The capacity-rate-distortion region $\set{C}\set{R}\set{D}$ is given by the closure of the union of all achievable rate-distortion tuples $(C,R,D)$.
\end{definition}

\section{Main Result}

In this section, we present the main result. First, the rate-distortion function of the estimator is given in subsection \ref{section_rate-distortion}. In Subsection \ref{section_ba_algorithm}, we propose an improved Blahut-Arimoto-type algorithm to calculate the rate-distortion function and prove its convergence. Finally, Subsection \ref{section:capacity-rate} presents the conclusion of the capacity-rate-distortion region.

\subsection{Rate-Distortion Functions} \label{section_rate-distortion}
Our main result in this subsection is the following. 
\begin{theorem}\label{thm:rate_distortion_theorem_single}
	Fix a $P_{\rv{X}}$ and $P_{\rv{S}}$. For stationary memoryless channels $P_{\rv{T}|\rv{XS}}^{ n}$, we have
	\begin{equation}\label{eq:rate_distortion_function}
		R(D) = \inf \limits_{P_{\rv{Z}|\rv{XT}}:\,\mathbb{E}[\map{d}(\rv{S},\rv{Z})] \le D}I(\rv{T};\rv{Z}|\rv{X})
	\end{equation} 
\end{theorem}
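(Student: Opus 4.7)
The plan is to follow the noisy-source-coding template of Ayanoglu and Gray \cite{ayanoglu_optimal_1990}, with $\rv{X}$ playing the role of side information shared by both stages of the estimator. The key reduction is the surrogate distortion
\[
\tilde{d}(x,t,z) \;=\; \mathbb{E}\bigl[\map{d}(\rv{S},z) \,\big|\, \rv{X}=x,\,\rv{T}=t\bigr],
\]
for which the Markov chain $\rv{S}\to(\rv{X},\rv{T})\to\rv{Z}$ (valid for any test channel $P_{\rv{Z}|\rv{X}\rv{T}}$ because $\rv{Z}$ is a function of $(\rv{X},\rv{T})$ alone) gives $\mathbb{E}[\map{d}(\rv{S},\rv{Z})] = \mathbb{E}[\tilde{d}(\rv{X},\rv{T},\rv{Z})]$. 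Since the whole estimator sits at the transmitter, the sequence $\rv{X}^n$ is common to $\map{h}_a$ and $\map{h}_b$, turning the problem into a conditional rate-distortion problem with source $\rv{T}$, side information $\rv{X}$, reconstruction $\rv{Z}$, and distortion $\tilde{d}$.

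For the converse, I would start from an arbitrary $(n,K,D)$-estimator and chain
\[
\log K \;\ge\; H(\rv{W}\mid \rv{X}^n) \;\ge\; I(\rv{W};\rv{T}^n\mid \rv{X}^n) \;\ge\; I(\rv{Z}^n;\rv{T}^n\mid \rv{X}^n),
\]
with the last step by data processing through $\rv{Z}^n=\map{h}_b(\rv{W})$. Because $P_{\rv{S}^n}$ is i.i.d.\ and the channel is memoryless, a short calculation gives $H(\rv{T}_i\mid \rv{X}^n,\rv{T}^{i-1})=H(\rv{T}_i\mid \rv{X}_i)$; combined with the conditioning-reduces-entropy bound $H(\rv{T}_i\mid \rv{X}^n,\rv{T}^{i-1},\rv{Z}^n)\le H(\rv{T}_i\mid \rv{X}_i,\rv{Z}_i)$, this single-letterizes to
\[
I(\rv{Z}^n;\rv{T}^n\mid\rv{X}^n) \;\ge\; \sum_{i=1}^n I(\rv{Z}_i;\rv{T}_i\mid \rv{X}_i) \;\ge\; \sum_{i=1}^n R^{\star}(D_i) \;\ge\; n\,R^{\star}(\bar D) \;\ge\; n\,R^{\star}(D),
\]
where $D_i=\mathbb{E}[\map{d}(\rv{S}_i,\rv{Z}_i)]$, $\bar D=\tfrac1n\sum_i D_i\le D$, $R^\star$ denotes the right-hand side of \eqref{eq:rate_distortion_function}, and the final two inequalities invoke the convexity of $R^\star$ in $D$ (via time sharing, with $P_{\rv{X}}$, $P_{\rv{S}}$, and $P_{\rv{T}|\rv{X}\rv{S}}$ held fixed) and its monotone non-increasing behaviour.

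For achievability, I would fix any $P_{\rv{Z}|\rv{X}\rv{T}}^\star$ with $\mathbb{E}[\map{d}(\rv{S},\rv{Z})]\le D-\delta$ and deploy the standard random-coding construction for conditional rate-distortion: for each typical $x^n$, draw $\lceil 2^{n(I(\rv{T};\rv{Z}|\rv{X})+\eta)}\rceil$ codewords $z^n$ i.i.d.\ from $\prod_i P_{\rv{Z}|\rv{X}}^\star(\cdot\mid x_i)$; the encoder $\map{h}_a$ outputs the index $\rv{W}$ of any codeword jointly typical with $(x^n,t^n)$ under $P_{\rv{X}\rv{T}\rv{Z}}^\star$, and $\map{h}_b$ looks up the corresponding $z^n$ in the $x^n$-indexed codebook (which the transmitter knows). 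The covering lemma gives vanishing failure probability, and strong typicality combined with $\mathbb{E}[\map{d}(\rv{S},\rv{Z})]=\mathbb{E}[\tilde d(\rv{X},\rv{T},\rv{Z})]$ controls the expected distortion up to $o(1)$. Sending $\delta,\eta\downarrow 0$ and minimizing over admissible test channels closes the gap with the converse.

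The main obstacle, in my view, is not mechanical but conceptual: reconciling the definition's factorization $\map{h}=\map{h}_a\circ\map{h}_b$ (in which $\map{h}_b$ formally reads only $\rv{W}\in\{1,\ldots,K\}$) with the conditional formula $I(\rv{T};\rv{Z}|\rv{X})$, which implicitly treats $\rv{X}^n$ as common side information. The justification is operational: both stages of the estimator physically reside at the transmitter, so an $\rv{X}^n$-indexed family of codebooks can be used without charging the $\rv{X}^n$-description to the rate $\tfrac1n\log K$. Once this point is granted, the rest reduces to the standard manipulations sketched above.
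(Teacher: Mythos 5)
Your proof is correct in outline but follows a genuinely different route from the paper's. For achievability, the paper does not pass through the surrogate distortion $\tilde{\map{d}}$ and joint typicality; it works one-shot, constructing the estimator that minimizes the \emph{excess-distortion probability} $\mathbb{P}[\map{d}(\rv{S}^n,\rv{Z}^n)>d]$, bounding it via the nonasymptotic covering lemma of Verd\'u in terms of the conditional information density $i(\rv{T}^n;\rv{Z}^n|\rv{X}^n)$, and only then taking $n\to\infty$ by the weak law of large numbers, converting excess distortion to expected distortion using $d_{\max}$. Your typicality argument buys a cleaner asymptotic statement at the cost of implicitly assuming bounded (or at least uniformly integrable) distortion inside the strong-typicality step; the paper's route would in principle yield finite-blocklength bounds. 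For the converse the divergence is larger: the paper uses an information-spectrum bound ($\mathbb{P}[j(\rv{T}^n|\rv{Z}^n\rv{X}^n)>\log K+\gamma]\le \epsilon+\exp(-\gamma)$ via Markov's inequality) followed by a proof by contradiction, whereas you use the classical chain $\log K\ge H(\rv{W}|\rv{X}^n)\ge I(\rv{Z}^n;\rv{T}^n|\rv{X}^n)\ge\sum_i I(\rv{T}_i;\rv{Z}_i|\rv{X}_i)$, single-letterization through memorylessness, and convexity of $R^\star(D)$. Your converse is arguably the more robust of the two: it applies to an arbitrary $(n,K,D)$-estimator, while the paper's WLLN step tacitly presumes the induced test channel is a single-letter product, which an arbitrary estimator need not satisfy. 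Do verify the Markov-chain claims you rely on ($S_i\to(\rv{X}_i,\rv{T}_i)\to\rv{Z}_i$ and $H(\rv{T}_i|\rv{X}^n,\rv{T}^{i-1})=H(\rv{T}_i|\rv{X}_i)$); both hold here only because the theorem fixes $\rv{X}^n$ i.i.d.\ from $P_{\rv{X}}$, exactly as the paper assumes in its own achievability. Finally, the tension you flag between Definition~1 (where $\map{h}_b$ formally reads only $\rv{W}$) and the $x^n$-indexed codebooks is real, and the paper resolves it the same way you do, by drawing the codebook conditionally on $\rv{X}^n$ without charging it to the rate; your explicit acknowledgment of this point is a strength rather than a gap.
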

\begin{remark}
	Theorem \ref{thm:rate_distortion_theorem_single} provides the minimum rate required for the estimator when the expected distortion is constrained to be less than $D$. Compared to similar results in \cite{chiriyath_2016_inner} and \cite{dong_rethinking_2023}, our result possesses clear operational meaning.
\end{remark}
\begin{proof}
	\underline{\textbf{Achievability part:}}
	For convenience, we use the notation $\epsilon = \mathbb{P}[\map{d}(\rv{S}^n, \map{h}(\rv{T}^n)) > d]$ to represent the excess-distortion probability.
	Upon observing $t^n \in \mathcal{T}^n$ and $x^n \in \set{X}^n$, the optimum estimator, which minimizes the excess-distortion probability, can be developed by minimizing
	\begin{equation}
		\mathbb{E}[1\left\{ \map{d}(\rv{S}^n,h(\rv{T}^n)) > d\right\}] = \mathbb{E}[\mathbb{P}[ \map{d}(\rv{S}^n,h(\rv{T}^n)) > d |\rv{X}^n,\rv{T}^n]].
	\end{equation}
	For given codewords $(z_1^n,...,z_{K}^n)$, define
	\begin{equation}
		\pi(t^n,z_i^n,x^n) = \mathbb{P}[\map{d}(\rv{S}^n,z_i^n) > d|\rv{T}^n=t^n,\rv{X}^n=x^n].
	\end{equation}
	Let
	\begin{equation}
		\map{h}_a^{(K)}(x^n,t^n) = \argmin \limits_{i \in \{1,...,K\}} \pi(t^n,z_i^n,x^n),
	\end{equation}
	where ties are broken arbitrarily. And then for $\map{h}_a^{(K)}(x^n,t^n) = i ^{\star}$, let $ \map{h}_b^{(K)} = z_{i^\star}^n$. We obtain that $\map{h} = \map{h}_a^{(K)} \circ \map{h}_b^{(K)}$ is the optimum estimator. 
	
	Assume the channel codeword $\rv{X}^n$ is drawn i.i.d. from $P_{\rv{X}}$. The estimator has full access to the transmitted channel codeword. Given $\rv{X}^n$, let the source codebook $\rv{Z}^{(K)} = (\rv{Z}_1^n,...,\rv{Z}_{K}^n)$ be drawn i.i.d. from $P_{\rv{Z}|\rv{X}}$, independently of the random variable $\rv{T}^n$, \ie, $P_{\rv{X} \rv{T} \rv{Z}^K}^n = P_{\rv{X}}^n \times P_{\rv{T}|\rv{X}}^n \times P_{\rv{Z}|\rv{X}}^n \times ... \times P_{\rv{Z}|\rv{X}}^n$. The excess-distortion probability satisfies
	\begin{align}
		\epsilon = & \mathbb{E} \left[ \min \limits_{i \in [K]} \pi ( \rv{T}^n, \rv{Z}_i^n, \rv{X}^n)\right] \nonumber \\
		= & \mathbb{E} \left[ \mathbb{E} \left[ \min \limits_{i \in [K] } \pi ( \rv{T}^n, \rv{Z}_i^n, \rv{X}^n) \Big | \rv{X}^n\right]\right] \\
		= & \mathbb{E} \left[ \int_{0}^{1} \mathbb{P}\left[ \min \limits_{i \in [K]} \pi ( \rv{T}^n, \rv{Z}_i^n, \rv{X}^n) > y \Big | \rv{X}^n \right] dy \right]. \label{thm:rc_estimator_proof_1} 
	\end{align}
	For given $x^n$, note that $\rv{Z}_1^n,...,\rv{Z}_{K}^n$ are independent random variables. We obtain that the integrand function in (\ref{thm:rc_estimator_proof_1}) can be bounded as follows:
	\begin{align}
		&\mathbb{P} \left[ \min \limits_{i \in [K]} \pi ( \rv{T}^n, \rv{Z}_i^n, x^n) > y \right]  \\
		= & \mathbb{P}\left[ \bigcap_{i \in [K]} \left\{ \pi ( \rv{T}^n, \rv{Z}_i^n, x^n) > y \right\} \right] \\
		\le & \mathbb{P}\left[ \pi ( \rv{T}^n, \rv{Z}^n, x^n) > y  \right] + \mathbb{P}\left[ i(\rv{T}^n;\rv{Z}^n|x^n) > \log K - \gamma \right] \nonumber \\
		& \ + e^{-\exp {(\gamma)}} \label{thm:rc_estimator_proof_3} ,
	\end{align}
	where (\ref{thm:rc_estimator_proof_3}) folows from the nonasymptotic covering lemma \cite[Lemma 5]{verdu_non-asymptotic_2012} and 
	\begin{equation}
		i(t^n;z^n|x^n) = \log \frac{dP_{\rv{Z}|\rv{XT}}^n(z^n|x^n,t^n)}{dP_{\rv{Z}|\rv{X}}^n(z^n|x^n)}
	\end{equation}
	is the conditional information density.
	Finally, we notice that
	\begin{align}
		&\mathbb{E}\left[ \int_{0}^{1}\mathbb{P}\left[ \pi ( \rv{T}^n, \rv{Z}^n, \rv{X}^n) > y  | \rv{X}^n  \right] dy\right] \nonumber \\
		= & \mathbb{E} \left[ \mathbb{E}\left[\pi(\rv{T}^n, \rv{Z}^n,\rv{X}^n) | \rv{X}\right] \right] \\
		= & \mathbb{P}\left[\map{d}(\rv{S}^n,\rv{Z}^n) > d\right].
	\end{align}
	The above argument implies that there exists a codebook $(z_1^n,...,z_K^n)$ such that the excess-distortion probability satisfies
	\begin{align} \label{eq:achi_bound}
		\epsilon 
		& \le \mathbb{P}\left[\map{d}(\rv{S}^n,\rv{Z}^n) > d\right] + \mathbb{P}\left[ i(\rv{T}^n;\rv{Z}^n|\rv{X}^n) > \log K - \gamma \right] \nonumber \\
		& \ \ \ \ + e^{-\exp(\gamma)}.
	\end{align}
	
	Let $\gamma = \delta n$ and let $\log K = n I(\rv{T};\rv{Z}|\rv{X}) + 2 \delta n$, where $\delta > 0$ is a constant. Take any $P_{\rv{Z}|\rv{XT}}$ such that $D = \mathbb{E}\left[ \map{d}(\rv{S},\rv{Z}) \right]$ and let $d= D -\delta < D$. Apply (\ref{eq:achi_bound}) to conclude that there exists a estimator $\map{h}:\set{T}^n \mapsto \set{Z}^n$ such that 
	\begin{align}\label{achievability_asy_proof_1}
		& \mathbb{P}\left[\map{d}(\rv{S}^n,\map{h}(\rv{T}^n)) > d\right] \nonumber \\
		\le & \mathbb{P}\left[\map{d}(\rv{T}^n,\rv{Z}^n) > d \right] + \mathbb{P}\left[ i(\rv{T};\rv{Z}|\rv{X}) > n( I(\rv{T};\rv{Z}|\rv{X})+\delta) \right] \nonumber \\
		&  + e^{-\exp(\delta n)} 
	\end{align}
	We bound the three terms in (\ref{achievability_asy_proof_1}), respectively. Note that 
	\begin{equation}
		i(\rv{T}^n;\rv{Z}^n|\rv{X}^n) = \sum_{i=1}^{n} \log \frac{dP_{\rv{Z}|\rv{XT}}}{dP_{\rv{Z}|\rv{X}}}(\rv{T}_i,\rv{Z}_i,\rv{X}_i)
	\end{equation}
	is a sum of i.i.d. random variables with mean $I(\rv{T};\rv{Z}|\rv{X})$. By WLLN, we obtain that
	$
	\lim \limits_{n \rightarrow \infty} \mathbb{P}\left[i(\rv{T}^n;\rv{Z}^n|\rv{X}^n) > n( I(\rv{T};\rv{Z}|\rv{X}) + \delta )\right] \rightarrow 0 .
	$
	On the other hand, we also have
	$
	\lim \limits_{n\rightarrow \infty}\mathbb{P}\left[\map{d}(\rv{S}^n,\rv{Z}^n) > D'\right] \rightarrow 0
	$
	and $\lim \limits_{n\rightarrow \infty}e^{-\exp(\delta n)} \rightarrow 0$. We combine the above argument to obtain that the RHS of (\ref{achievability_asy_proof_1}) tends to zero. Then, note that the average distortion can be bounded by 
	\begin{align} 
		&\mathbb{E}\left[ \map{d}(\rv{S}^n,\map{h}(\rv{T}^n)) \right] \nonumber \\
		\le & D - \delta + \mathbb{E}\left[ \map{d}(\rv{S}^n,\map{h}(\rv{T}^n)) 1 \left\{ \map{d}(\rv{S}^n,\map{h}(\rv{T}^n)) > D' \right\} \right] \label{achievability_asy_proof_2} \\
		= & D - \delta + d_{\max}\mathbb{P}\left[\map{d}(\rv{S}^n,\map{h}(\rv{T}^n)) > D'\right] \\
		= & D - \delta +o(1) \\
		\le & D
	\end{align}
	Thus, for sufficiently large $n$, the expected distortion is at most $D$, as required.
	
	We have shown that there exists a $(n,K,D)$-estimator with $\log K = n I(\rv{T};\rv{Z}|\rv{X}) + 2 \delta n$ and hence we obtain that
	\begin{equation}
		R(D) = \inf \limits_{P_{\rv{Z}|\rv{XT}}:\mathbb{E}[\map{d}(\rv{S},\rv{Z})] \le D}I(\rv{T};\rv{Z}|\rv{X})
	\end{equation}
	is achievable.
	
	{\noindent \underline{\textbf{Converse part:}}	Let } 
	\begin{equation}
		j(t|zx) = \log \frac{dP_{\rv{T}|\rv{ZX}}^n(t^n|z^n,x^n)}{dP_{T|X}^n(t^n|x^n)}.
	\end{equation}
	Define the encoder and decoder to be $P_{\rv{K}|\rv{XT}}^n$ and $P_{\rv{Z}|\rv{K}}^n$, where $\rv{K}\in [K]$. 
	We have
	\begin{align}
		&\mathbb{P}\left[ j(\rv{T}^n|\rv{Z}^n\rv{X}^n) > \log K + \gamma \right] \nonumber \\
		= & \mathbb{P}\left[ j(\rv{T}^n|\rv{Z}^n\rv{X}^n) > \log K + \gamma, \map{d}(\rv{S}^n,\rv{Z}^n) > d \right] \nonumber \\
		& +  \mathbb{P}\left[ j(\rv{T}^n|\rv{Z}^n\rv{X}^n) > \log K + \gamma, \map{d}(\rv{S}^n,\rv{Z}^n) \le d \right] \\
		\le & \epsilon + \mathbb{P}\left[ j(\rv{T}^n|\rv{Z}^n\rv{X}^n) > \log K + \gamma \right] \\
		\le & \epsilon + \frac{\exp(-\gamma)}{K}\mathbb{E}\left[ \exp\left( j(\rv{T}^n|\rv{Z}^n\rv{X}^n) \right) \right] \label{thm:converse_proof_1}\\
		\le & \epsilon + \frac{\exp(-\gamma)}{K} \sum_{i=1}^{K} \int_{x^n \in \set{X}^n}dP_{\rv{X}}^n(x^n) \nonumber \\ 
		& \ \ \int_{z^n \in \set{Z}^n} dP_{\rv{Z}|\rv{K}}^n(z^n|k^n) \int_{t^n \in \set{T}^n} dP_{\rv{T}|\rv{XZ}}^n(t^n|x^n,z^n)\label{thm:converse_proof_2}\\
		= & \epsilon + \exp(-\gamma), \label{thm:converse_proof_3}
	\end{align}
	where (\ref{thm:converse_proof_1}) follows from Markov's inequality, (\ref{thm:converse_proof_2}) simply expands the expectation and noting $P_{\rv{K}|\rv{XT}}(k|x,t) \le 1$ for every $(x,t,k) \in \set{X} \times \set{T} \times \{ 1,...,K \}$. 
	
	Then, we prove the converse part by contradiction. Suppose there exists a $(n,M,D)$-estimator such that $\mathbb{E}[\map{d}(\rv{S}^n,\rv{Z}^n)] \le D$ but $R(D) < I(\rv{T;\rv{Z}|\rv{X}})$. 
	Fix any $\delta > 0$. Let $\log K = n I(\rv{T;\rv{Z}|\rv{X}}) - 2\delta n < n I(\rv{T;\rv{Z}|\rv{X}}) $ and $\gamma = \delta n$. By virtue of (\ref{thm:converse_proof_3}), we get for any $(n,M,D)$-estimator:
	\begin{align}
		&\mathbb{P}\left[\map{d}(\rv{S}^n,\map{h}(\rv{T}^n)) > D\right] \nonumber  \\
		\ge & \mathbb{P}[i(\rv{T}^n;\rv{Z}^n|\rv{X}^n) \ge n (I(\rv{T;\rv{Z}|\rv{X}}) - \delta)] - \exp(-\gamma) \label{converse_asy_proof_1}.
	\end{align}
	However, the probability on the RHS of (\ref{converse_asy_proof_1}) tends to $1$ and thus $\mathbb{E}[\map{d}(\rv{S}^n,\map{h}(\rv{T}^n))] > D$, thereby contradicting the fact that $\mathbb{E}\left[ \map{d}(\rv{S}^n,\map{h}(\rv{T}^n)) \right] \le D$. Hence any $(n,M,D)$-estimator such that $\mathbb{E}\left[ \map{d}(\rv{S}^n,\map{h}(\rv{T}^n)) \right] \le D$ should satisfy $\log K \ge I(\rv{T;\rv{Z}|\rv{X}})$.
	
	\end{proof}

\subsection{Computation of Rate-Distortion Functions} \label{section_ba_algorithm}

In this subsection, we present a Blahut-Arimoto type algorithm that can be used to solve the optimization problem (\ref{eq:rate_distortion_function}).
By using the same argument in the proof of \cite[Theorem 2.7.4]{cover_elements_nodate}, the mutual information $I(\rv{T};\rv{Z}|\rv{X})$ is a convex function of $P_{\rv{Z}|\rv{XT}}$ for fixed $P_{\rv{X}}$ and $P_{\rv{S}}$. To solve (\ref{eq:rate_distortion_function}), we have the following optimization problem:
\begin{align}
	L_\mu 
	& = \min \limits_{P_{\rv{Z}|\rv{XT}}} I(\rv{T};\rv{Z}|\rv{X}) - \mu \mathbb{E}[\map{d}(\rv{S},\rv{Z})].
\end{align}
Define
\begin{equation}
	F_\mu(P_{\rv{Z}|\rv{XT}}, Q_{\rv{Z}|\rv{X}} ) = D(P_{\rv{Z}|\rv{XT}} \| Q_{\rv{Z}|\rv{X}} |P_{\rv{XT}}) - \mu \mathbb{E}[\map{d}(\rv{S},\rv{Z})].
\end{equation}
Then, we consider the Blahut-Arimoto type alternating optimization techniques \cite{blahut_computation_1972,csiszar_computation_1974}. 
\begin{lemma} \label{thm:alternating_minimization}
	For given $P_{\rv{X}}$ and $P_{\rv{S}}$, we have
	\begin{enumerate}
		\item[(1)] For fixed $P_{\rv{Z}|\rv{XT}}$, $F_\mu(P_{\rv{Z}|\rv{XT}}, Q_{\rv{Z}|\rv{X}} )$ is minimized by $Q_{\rv{Z}|\rv{X}}(z|x) = Q_{z|x}(P_{\rv{Z}|\rv{XT}})$ for $(x,z) \in \set{X} \times \set{Z}$, where 
		\begin{equation}
			Q_{z|x}(P_{\rv{Z}|\rv{XT}}) = \sum_{t \in \set{T}} \sum_{s \in \set{S}} P_{\rv{S}}(s) P_{\rv{T}|\rv{XS}} (t|x,s) P_{\rv{Z}|\rv{XT}}(z|x,t). \label{eq:minimization_Q}
		\end{equation}
		\item[(2)] For fixed $Q_{\rv{Z}|\rv{X}}$, $F_\mu(P_{\rv{Z}|\rv{XT}}, Q_{\rv{Z}|\rv{X}} )$ is minimized by $P_{\rv{Z}|\rv{XT}}(z|xt) = P_{z|xt}(Q_{\rv{Z}|\rv{X}})$ for $(x,z,t) \in \set{X} \times \set{Z} \times \set{T}$, where
		\begin{equation}
			P_{z|xt}(Q_{\rv{Z}|\rv{X}}) = \frac{Q_{\rv{Z}|\rv{X}}(z|x)\exp \left( \mu \mathbb{E}[\map{d}(\rv{S},z)|x,t] \right)}{\sum_{a \in \set{Z}}Q_{\rv{Z}|\rv{X}}(a|x)\exp \left( \mu \mathbb{E}[\map{d}(\rv{S},a)|x,t] \right)}, \label{eq:minimization_P}
		\end{equation}
		and $\mu \le 0$ specifies the value of $D$ and $R(D)$.
	\end{enumerate}
\end{lemma}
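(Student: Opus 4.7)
The plan is to apply the standard double-minimization argument behind Blahut-Arimoto: both halves of the lemma reduce to Gibbs' inequality once $F_\mu$ is rewritten in a suitable form. The common setup is to observe that, under the fixed priors $P_{\rv{X}}$ and $P_{\rv{S}}$ (which are independent), the joint $P_{\rv{XST}}$, its marginal $P_{\rv{XT}}(x,t) = P_{\rv{X}}(x)\sum_s P_{\rv{S}}(s)P_{\rv{T}|\rv{XS}}(t|x,s)$, and the conditional expected distortion $\bar{d}_{x,t}(z) := \mathbb{E}[\map{d}(\rv{S},z)|\rv{X}=x,\rv{T}=t]$ depend only on the fixed data. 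In particular, $\mathbb{E}[\map{d}(\rv{S},\rv{Z})] = \sum_{x,t}P_{\rv{XT}}(x,t)\sum_z P_{\rv{Z}|\rv{XT}}(z|x,t)\,\bar{d}_{x,t}(z)$ is linear in $P_{\rv{Z}|\rv{XT}}$ and does not involve $Q_{\rv{Z}|\rv{X}}$ at all.

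For (1), I would expand the conditional divergence as
\begin{equation}
D(P_{\rv{Z}|\rv{XT}}\|Q_{\rv{Z}|\rv{X}}\,|\,P_{\rv{XT}}) = -H(\rv{Z}|\rv{X},\rv{T}) - \sum_{x}P_{\rv{X}}(x)\sum_z \tilde{p}(z|x)\log Q_{\rv{Z}|\rv{X}}(z|x),
\end{equation}
with $\tilde{p}(z|x) := \sum_{t} P_{\rv{T}|\rv{X}}(t|x)\,P_{\rv{Z}|\rv{XT}}(z|x,t)$. Since $P_{\rv{Z}|\rv{XT}}$ is fixed, only the last sum depends on $Q_{\rv{Z}|\rv{X}}$, and by Gibbs' inequality it is minimized, separately for each $x$, by taking $Q_{\rv{Z}|\rv{X}}(z|x) = \tilde{p}(z|x)$. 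Substituting $P_{\rv{T}|\rv{X}}(t|x) = \sum_s P_{\rv{S}}(s)P_{\rv{T}|\rv{XS}}(t|x,s)$ then recovers formula (\ref{eq:minimization_Q}) verbatim.

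For (2), I would absorb the distortion term into a tilted reference measure. Rewriting
\begin{equation}
F_\mu = \sum_{x,t}P_{\rv{XT}}(x,t)\sum_z P_{\rv{Z}|\rv{XT}}(z|x,t)\log \frac{P_{\rv{Z}|\rv{XT}}(z|x,t)}{Q_{\rv{Z}|\rv{X}}(z|x)\exp(\mu\,\bar{d}_{x,t}(z))},
\end{equation}
I observe that the slices $P_{\rv{Z}|\rv{XT}}(\cdot|x,t)$ for distinct $(x,t)$ are independent probability vectors, so the minimization decouples pointwise. Defining the unnormalized tilted weight $w_{x,t}(z) := Q_{\rv{Z}|\rv{X}}(z|x)\exp(\mu\,\bar{d}_{x,t}(z))$ and partition function $Z_{x,t} := \sum_a w_{x,t}(a)$, the inner sum for each $(x,t)$ becomes $D\!\left(P_{\rv{Z}|\rv{XT}}(\cdot|x,t)\,\big\|\,w_{x,t}/Z_{x,t}\right) - \log Z_{x,t}$; Gibbs' inequality then pins down the unique minimizer as the Boltzmann law $P_{\rv{Z}|\rv{XT}}(z|x,t) = w_{x,t}(z)/Z_{x,t}$, which is exactly (\ref{eq:minimization_P}). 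The main thing to be careful about is checking that the slice-by-slice reduction is lossless (automatic here because the weights $P_{\rv{XT}}(x,t)$ are nonnegative and the simplex constraints for distinct $(x,t)$ are independent) and that the sign convention $\mu \le 0$ keeps $\exp(\mu\,\bar{d}_{x,t}(z)) \in (0,1]$ a legitimate positive reweighting; everything else is routine convex analysis.
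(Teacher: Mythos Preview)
Your proposal is correct. Part (1) is essentially the same as the paper's argument: the paper writes the identity
\[
D(P_{\rv{Z}|\rv{XT}}\|Q_{\rv{Z}|\rv{X}}\,|\,P_{\rv{XT}})
= I(\rv{T};\rv{Z}|\rv{X}) + D(P_{\rv{Z}|\rv{X}}\|Q_{\rv{Z}|\rv{X}}\,|\,P_{\rv{X}})
\]
and invokes non-negativity of the last divergence, which is exactly your Gibbs-inequality step on the cross-entropy term, just packaged differently.

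Part (2) is where you genuinely diverge. The paper proceeds by Lagrange multipliers: it notes convexity of $F_\mu$ in $P_{\rv{Z}|\rv{XT}}$, adjoins the normalization constraints $\sum_z P_{\rv{Z}|\rv{XT}}(z|x,t)=1$ with multipliers $\lambda(x,t)$, sets the partial derivative in $P_{\rv{Z}|\rv{XT}}(z|x,t)$ to zero, and solves for the exponential form, fixing the multipliers via normalization. Your route instead absorbs the distortion into a tilted reference measure and reduces each $(x,t)$-slice to a KL divergence plus a constant $-\log Z_{x,t}$, so the minimizer drops out of Gibbs' inequality with no calculus. Your argument is slightly cleaner (it certifies global optimality directly rather than via first-order conditions plus a convexity remark) and makes the decoupling across $(x,t)$ explicit; the paper's Lagrangian computation is more mechanical but arrives at the same formula. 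Either way the content is standard Blahut--Arimoto, and nothing is missing from your sketch.
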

\begin{proof}
	See Appendix \ref{Appendix_BA_algorithm}.
\end{proof}

The following results show that the alternating optimization using Theorem \ref{thm:alternating_minimization} converges to the rate-distortion function that satisfies some target distortions. In the proof, we use the techniques similar to \cite{Arimoto_1972,csiszar_computation_1974}.

\begin{theorem}\label{thm:converge_to_rate_distortion_function}
	Suppose alphabet $\set{X}, \set{Z}, \set{T}$ are finite and start from an arbitrary probability measure $Q^{(0)}$ with $Q^{(0)}(z|x)>0$. Let $P_{z|xt}(\cdot)$ and $Q_{z|x}(\cdot)$ be defined as (\ref{eq:minimization_P}) and (\ref{eq:minimization_Q}), respectively. For $k \in \mathbb{Z}_{\ge 0} $, set recursively 
	\begin{align}
		P^{(k+1)}(z|xt) & = P_{z|xt} \left( Q^{(k)} \right), \label{eq:alternating_P} \\
		Q^{(k+1)}(z|x) &= Q_{z|x} \left( P^{(k+1)}\right),\label{eq:alternating_Q}
	\end{align} 
	Then, there exists a $Q_{\rv{Z}|\rv{X}}^\star$ such that $Q^{(k)} \rightarrow Q_{\rv{Z}|\rv{X}}^\star, \ P^{(k)} \rightarrow P_{\rv{Z}|\rv{XT}}^{\star} = P(Q_{\rv{Z}|\rv{X}}^\star)$, and $F_\mu(P_{\rv{Z}|\rv{XT}}^{\star},Q_{\rv{Z}|\rv{X}}^\star) = L_\mu$.
\end{theorem}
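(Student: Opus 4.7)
The plan is to combine the monotone decrease of $F_\mu$ along the iterations with a Csiszár–Tusnády Pythagorean identity to pin the limit value to $L_\mu$, and then to upgrade convergence of the value sequence to pointwise convergence of the iterates through a KL-decay estimate.

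First, Lemma \ref{thm:alternating_minimization} directly gives the coordinate-minimization chain
\[
F_\mu(P^{(k+1)}, Q^{(k+1)}) \le F_\mu(P^{(k+1)}, Q^{(k)}) \le F_\mu(P^{(k)}, Q^{(k)}),
\]
so $a_k := F_\mu(P^{(k)}, Q^{(k)})$ is non-increasing. Because $\mu \le 0$, both $D(\cdot\|\cdot\,|\,P_{\rv{X}\rv{T}})$ and $-\mu\,\mathbb{E}[\map{d}(\rv{S},\rv{Z})]$ are non-negative, so $a_k \ge 0$ and hence $a_k \downarrow L^\star \ge L_\mu$. The iterates live in a compact product of probability simplices, and by induction the Gibbs-type form (\ref{eq:minimization_P}) and the averaging (\ref{eq:minimization_Q}) propagate the positivity of $Q^{(0)}$ to every $P^{(k)}, Q^{(k)}$, so all log terms below are finite.

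Next, I would fix an optimizer $P^o$ of (\ref{eq:rate_distortion_function}) and let $Q^o := Q_{z|x}(P^o)$ via (\ref{eq:minimization_Q}), so that $F_\mu(P^o, Q^o) = L_\mu$. A direct expansion, using that $Q^o$ is exactly the $P_{\rv{X}\rv{T}}$-averaged marginal of $P^o$, yields the Pythagorean identity
\[
F_\mu(P^o, Q^{(k)}) = L_\mu + D\bigl(Q^o \,\big\|\, Q^{(k)} \,\big|\, P_{\rv{X}}\bigr),
\]
which, combined with $a_{k+1} \le F_\mu(P^{(k+1)}, Q^{(k)}) \le F_\mu(P^o, Q^{(k)})$, reduces the problem to showing $\Delta_k := D(Q^o\|Q^{(k)}|P_{\rv{X}}) \to 0$. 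To force this, I would exploit the Gibbs closed form (\ref{eq:minimization_P}) to derive, for every $P$, the exact three-term identity
\[
F_\mu(P, Q^{(k)}) - F_\mu(P^{(k+1)}, Q^{(k+1)}) = D(P\|P^{(k+1)}|P_{\rv{X}\rv{T}}) + D(Q^{(k+1)}\|Q^{(k)}|P_{\rv{X}}),
\]
then specialize to $P = P^o$, discard the first non-negative term, and apply the Pythagorean identity at levels $k$ and $k{+}1$ to obtain the five-point inequality $\Delta_{k+1} \le \Delta_k - (a_{k+1} - L_\mu)$. Telescoping yields $\sum_{k\ge 0}(a_{k+1} - L_\mu) \le \Delta_0 < \infty$, so $a_k \to L_\mu$ and $\Delta_k \to 0$. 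Pinsker's inequality on the finite alphabet then gives $Q^{(k)} \to Q^o =: Q_{\rv{Z}|\rv{X}}^\star$ pointwise, and continuity of (\ref{eq:minimization_P}) on the strictly positive simplex gives $P^{(k)} \to P(Q_{\rv{Z}|\rv{X}}^\star) =: P_{\rv{Z}|\rv{X}\rv{T}}^\star$ with $F_\mu(P_{\rv{Z}|\rv{X}\rv{T}}^\star, Q_{\rv{Z}|\rv{X}}^\star) = L_\mu$.

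The main obstacle is the exact three-term identity above. It is the algebraic heart of the argument and requires substituting the exponential form (\ref{eq:minimization_P}) into $D(P\|P^{(k+1)}|P_{\rv{X}\rv{T}})$ so that the $\mu\,\mathbb{E}[\map{d}]$ contributions cancel cleanly against the softmax exponents, and then verifying that the log-normalizer of (\ref{eq:minimization_P}), after $P_{\rv{X}\rv{T}}$-averaging and invoking (\ref{eq:minimization_Q}), re-assembles into exactly $D(Q^{(k+1)}\|Q^{(k)}|P_{\rv{X}})$. Care is also needed for the degenerate case in which the optimizer $P^o$ is not unique, where the argument above must be carried out with $Q^o$ replaced by the (unique) $P_{\rv{X}}$-marginal shared by all optimizers. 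Once the identity is in hand, the telescoping and limit-passage steps are routine.
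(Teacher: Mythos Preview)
Your overall scheme---monotone descent, Pythagorean identity, five-point inequality, telescoping---is exactly the Csisz\'ar--Tusn\'ady route the paper takes, and your three-term identity is correct (it is the combination of the paper's two partial Pythagorean identities for the $P$-step and the $Q$-step). One minor imprecision: to extract $\Delta_{k+1}$ on the right you do not ``discard the first term'' but rather lower-bound it via the KL chain rule, $D(P^o\|P^{(k+1)}|P_{\rv{XT}})\ge D(Q^o\|Q^{(k+1)}|P_{\rv{X}})=\Delta_{k+1}$, and then drop the non-negative second term $D(Q^{(k+1)}\|Q^{(k)}|P_{\rv{X}})$.

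There is, however, a real gap at ``Telescoping yields $\sum_{k\ge 0}(a_{k+1}-L_\mu)\le\Delta_0$, so $a_k\to L_\mu$ and $\Delta_k\to 0$.'' Telescoping the five-point inequality gives only that $\Delta_k$ is non-increasing and that $a_k\to L_\mu$; it does \emph{not} force $\Delta_k\to 0$. If the optimum is not unique and the iterates head toward an optimizer different from your chosen $Q^o$, then $\Delta_k\to D(Q^o\|Q^\star|P_{\rv{X}})>0$, so Pinsker gives you nothing. Your proposed patch---pass to a putatively unique marginal $Q^o$ shared by all optimizers---would need its own proof, and even then the five-point inequality alone still yields only monotonicity of $\Delta_k$, not the limit zero. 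The paper closes this gap by compactness: Bolzano--Weierstrass on the finite simplex produces $Q^{(k_i)}\to Q^\star$; continuity plus $a_k\to L_\mu$ shows $(P(Q^\star),Q^\star)$ attains $L_\mu$; and only \emph{then} does one apply the five-point inequality with this specific $Q^\star$ to see that $D(Q^\star\|Q^{(k)}|P_{\rv{X}})$ is non-increasing in $k$ and vanishes along the subsequence, hence along the full sequence. That compactness step is not routine and is the piece your plan is missing.
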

\begin{proof}
	See Appendix \ref{Appendix_BA_algorithm}.
\end{proof}

Through the recursive procedure established in Theorem \ref{thm:converge_to_rate_distortion_function}, we obtain the optimal estimator $P_{\rv{Z}|\rv{XT}}$. We describe this procedure  using Algorithm \ref{Algorithm:BA_SDRD}.

\begin{algorithm}[h]
	\caption{Blahut-Arimoto Type Algorithm for $R(D)$}
	\label{Algorithm:BA_SDRD}
	\begin{algorithmic}[1]
		\REQUIRE
		$\mu \le 0$, $\delta > 0$
		\STATE Set $k=0$ and let $Q^{(0)}_{\rv{Z}|\rv{X}}(z|x) = \frac{1}{|\set{Z}|}$ for each $x \in \set{X}$;
		\REPEAT
		\STATE $k \leftarrow k+1$;
		\STATE Compute $P^{(k)}_{\rv{Z}|\rv{XT}}$ using (\ref{eq:alternating_P});
		\STATE Compute $Q^{(k)}_{\rv{Z}|\rv{X}}$ using (\ref{eq:alternating_Q});
		\UNTIL $\sum_{x \in \set{X}} \sum_{t \in \set{T}} \left| Q^{(k)}_{\rv{Z}|\rv{X}}(z|x)-Q^{(k-1)}_{\rv{Z}|\rv{X}}(z|x) \right| \le \delta$.
		\STATE Set $P_{\rv{Z}|\rv{XT}} \leftarrow P^{(k)}_{\rv{Z}|\rv{XT}}$.
	\end{algorithmic}	
	
\end{algorithm}

\subsection{Capacity-Rate-Distortion Region} \label{section:capacity-rate}
As shown in \cite{ahmadipour_information-theoretic_2024}, the following lemma achieves the minimum distortion. 
\begin{lemma}[\textit{\texorpdfstring{\cite[Lemma 1]{ahmadipour_information-theoretic_2024}}}] \label{lemma_minimum_distortion}
	Let $\map{c}(x) = \mathbb{E}[\pi(x,\rv{T})|\rv{X}=x],$
	\begin{equation} \label{eq:minimum_lamma_1}
		\pi(x,t) = \min \limits_{z \in \set{Z}} \sum_{s \in \set{S}} P_{\rv{S}|\rv{XT}}(s|x,t) \map{d}(s,z).
	\end{equation}
	The minimum distortion $\mathbb{E}[\map{c}(\rv{X})]$ is achieved by
	\begin{equation} \label{eq:deterministic_estimator}
		\map{h}^{\star}(z|x,t) = 
		\begin{cases}
			1 & , z = \argmin\limits_{z \in \set{Z}}  \sum_{s \in \set{S}} P_{\rv{S}|\rv{XT}}(s|x,t) \map{d}(s,z) \\
			0 & , o.w.
		\end{cases}
	\end{equation}
\end{lemma}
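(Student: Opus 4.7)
The plan is to prove the lemma by a standard pointwise minimization argument: since the expected distortion decomposes as an average over $(x,t)$ of a conditional expected distortion that depends on the estimator only through $P_{\rv{Z}|\rv{XT}}(\cdot|x,t)$, I can minimize separately at each pair $(x,t)$, and for a fixed $(x,t)$ the objective is linear in a probability distribution on $\set{Z}$, so a Dirac mass at the $z$ minimizing the Bayes posterior risk is optimal.

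First, I would write any (possibly randomized) estimator as a conditional kernel $\map{h}(z|x,t)$ and expand the expected distortion by iterated expectation:
\begin{align}
\mathbb{E}[\map{d}(\rv{S},\rv{Z})]
&= \sum_{x,t} P_{\rv{X}}(x)\,P_{\rv{T}|\rv{X}}(t|x) \sum_{z} \map{h}(z|x,t) \sum_{s} P_{\rv{S}|\rv{XT}}(s|x,t)\,\map{d}(s,z) \nonumber \\
&= \mathbb{E}\!\left[\sum_{z} \map{h}(z|\rv{X},\rv{T}) \sum_{s} P_{\rv{S}|\rv{XT}}(s|\rv{X},\rv{T})\,\map{d}(s,z)\right].
\end{align}
Next, for fixed $(x,t)$, the inner sum $\sum_{z} \map{h}(z|x,t)\,\phi(x,t,z)$ with $\phi(x,t,z) := \sum_{s} P_{\rv{S}|\rv{XT}}(s|x,t)\,\map{d}(s,z)$ is a convex combination of the values $\{\phi(x,t,z):z\in\set{Z}\}$, so it is lower-bounded by $\min_{z}\phi(x,t,z) = \pi(x,t)$, with equality attained by placing all mass on any minimizer. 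This establishes the lower bound $\mathbb{E}[\map{d}(\rv{S},\rv{Z})]\ge \mathbb{E}[\pi(\rv{X},\rv{T})]$.

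To close the loop, I would verify that the candidate $\map{h}^{\star}$ in (\ref{eq:deterministic_estimator}) actually achieves this lower bound: substituting the Dirac kernel gives $\sum_z \map{h}^{\star}(z|x,t)\,\phi(x,t,z)=\pi(x,t)$, and then applying the tower property,
\begin{equation}
\mathbb{E}[\map{d}(\rv{S},\map{h}^{\star}(\rv{X},\rv{T}))] = \mathbb{E}[\pi(\rv{X},\rv{T})] = \mathbb{E}\!\left[\mathbb{E}[\pi(\rv{X},\rv{T})|\rv{X}]\right] = \mathbb{E}[\map{c}(\rv{X})]. \nonumber
\end{equation}
Combining the two directions yields the claim. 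The only minor subtlety (and arguably the hardest point, though still routine) is a measurability remark: in the finite-alphabet case there is nothing to check, but to state the result cleanly for general alphabets one would invoke a measurable selection theorem to ensure that $\argmin_z \phi(\rv{X},\rv{T},z)$ can be realized as a measurable function; for the discrete setting considered here, an arbitrary tie-breaking rule suffices.
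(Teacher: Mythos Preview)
Your argument is correct. Note, however, that the paper does not actually supply its own proof of this lemma: it is quoted verbatim from \cite[Lemma 1]{ahmadipour_information-theoretic_2024} and invoked without further justification. What you have written is precisely the standard Bayesian decision-theoretic argument that underlies such results---decompose $\mathbb{E}[\map{d}(\rv{S},\rv{Z})]$ via the tower property into an average over $(x,t)$ of the posterior risk, observe that for fixed $(x,t)$ the randomized estimator produces a convex combination of the values $\phi(x,t,z)$ and hence cannot beat $\min_z \phi(x,t,z)=\pi(x,t)$, and then check that the deterministic selector $\map{h}^\star$ attains this pointwise lower bound. This is exactly the reasoning one would expect in the cited reference, and your remark about measurable selection (unnecessary in the finite-alphabet setting but worth flagging) is appropriate.
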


We define the following set:
\begin{align}
	\set{P}^{\star}(D,B) = \left\{ P_{\rv{X}} \, \middle | \, 
		\begin{aligned}
			\mathbb{E}[\map{c}(\rv{X})] &\le D, \\
			\mathbb{E}[\map{b}(\rv{X})] &\le B.
		\end{aligned} \right\} 
\end{align}
Then, the capacity-rate-distortion region can be bounded as following result.
\begin{proposition}
	The capacity-rate-distortion region $\set{C}\set{R}\set{D}$ satisfies the constraints
	\begin{align}
		C & \le \max \limits_{P_{\rv{X}} \in \set{P}^{\star}(D_0,B)} I(\rv{X};\rv{Y}|\rv{S}), \\
		R & \ge \min_{P_{\rv{Z|\rv{XT}}}:\mathbb{E}_{P_{\rv{X}}^{\star}P_{\rv{Z|\rv{XT}}}}[\map{d}(\rv{S},\rv{Z})]\le D} I(\rv{T};\rv{Z}|\rv{X}), \\
		D & \ge \mathbb{E}_{P_{\rv{X}}^{\star}P_{\rv{Z|\rv{XT}}}^{\star}}[\map{d}(\rv{S},\rv{Z})],
	\end{align}
	where $D_0 \in \{D \,| \, D \ge \min \limits_{P_{\rv{X}}} \mathbb{E}[\map{c}(\rv{X})] \}$, 
	\begin{align}
		P_{\rv{X}}^{\star} &= \argmax \limits_{P_{\rv{X}} \in \set{P}^{\star}(D_0,B)} I(\rv{X};\rv{Y}|\rv{S}), \\
		P_{\rv{Z|\rv{XT}}}^{\star} &= \argmin \limits_{P_{\rv{Z|\rv{XT}}}:\mathbb{E}_{P_{\rv{X}}^{\star}P_{\rv{Z|\rv{XT}}}}[\map{d}(\rv{S},\rv{Z})]\le D} I(\rv{T};\rv{Z}|\rv{X}).
	\end{align}
\end{proposition}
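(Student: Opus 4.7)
The plan is to prove this as a converse (outer) bound for the region by combining standard single-letter converse techniques with Theorem \ref{thm:rate_distortion_theorem_single} and Lemma \ref{lemma_minimum_distortion}. Fix any achievable tuple $(C,R,D)$ realized by a sequence of capacity-rate-distortion $(n,M,K,D)$-codes satisfying Definition \ref{definition_code}.

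For the capacity bound, I would begin with Fano's inequality applied to $(\rv{M},\hat{\rv{M}})$ to obtain $nC \le I(\rv{M};\rv{Y}^n,\rv{S}^n) + n\epsilon_n$ with $\epsilon_n \to 0$. Since the message $\rv{M}$ is independent of the i.i.d.\ state sequence $\rv{S}^n$, this reduces to $I(\rv{M};\rv{Y}^n|\rv{S}^n) + n\epsilon_n$. A chain-rule expansion together with the memoryless property of $P_{\rv{Y}\rv{T}|\rv{X}\rv{S}}$ and the fact that $\rv{X}_i$ is a function of $(\rv{M},\rv{T}^{i-1})$ single-letterizes this into $\sum_{i=1}^{n} I(\rv{X}_i;\rv{Y}_i|\rv{S}_i) + n\epsilon_n$. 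Introducing a time-sharing variable $\rv{Q}$ uniform on $[n]$ (independent of all other variables) and setting $\rv{X} = \rv{X}_{\rv{Q}}$, $\rv{Y} = \rv{Y}_{\rv{Q}}$, $\rv{S} = \rv{S}_{\rv{Q}}$ then yields $C \le I(\rv{X};\rv{Y}|\rv{S})$ in the limit.

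The next step is to pin down the admissible single-letter input distributions. By Lemma \ref{lemma_minimum_distortion}, the minimum per-letter expected distortion for input $\rv{X}_i$ is $\mathbb{E}[\map{c}(\rv{X}_i)]$, so the code constraint $\mathbb{E}[\map{d}(\rv{S}^n,\rv{Z}^n)] \le D$ forces $\tfrac{1}{n}\sum_i \mathbb{E}[\map{c}(\rv{X}_i)] \le D$. Under the time-sharing substitution this becomes $\mathbb{E}[\map{c}(\rv{X})] \le D_0$ for an appropriate $D_0 \ge \min_{P_{\rv{X}}} \mathbb{E}[\map{c}(\rv{X})]$; the cost constraint analogously gives $\mathbb{E}[\map{b}(\rv{X})] \le B$, so $P_{\rv{X}} \in \set{P}^{\star}(D_0,B)$, and maximizing $I(\rv{X};\rv{Y}|\rv{S})$ over this set yields the claimed upper bound on $C$. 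The rate bound follows by invoking the converse direction of Theorem \ref{thm:rate_distortion_theorem_single} at the capacity-maximizing $P_{\rv{X}}^{\star}$, so that $\tfrac{1}{n}\log K \ge \inf_{P_{\rv{Z}|\rv{XT}}} I(\rv{T};\rv{Z}|\rv{X})$ under the distortion constraint, and the distortion bound is the value of $\mathbb{E}[\map{d}(\rv{S},\rv{Z})]$ evaluated at $(P_{\rv{X}}^{\star},P_{\rv{Z}|\rv{XT}}^{\star})$.

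The main obstacle is maintaining consistency across the two nested optimizations: the time-averaged input distribution used in the capacity converse must coincide with the $P_{\rv{X}}^{\star}$ parametrizing the rate-distortion functional in Theorem \ref{thm:rate_distortion_theorem_single}, and one must verify that the deterministic estimator used to saturate Lemma \ref{lemma_minimum_distortion} is compatible with the $P_{\rv{Z}|\rv{XT}}^{\star}$ achieving the rate-distortion infimum. A secondary difficulty is that the threshold $D_0$ must be extracted uniformly across the sequence of codes; handling this carefully, via convexity of $\mathbb{E}[\map{c}(\rv{X})]$ in $P_{\rv{X}}$ together with a limiting argument in $n$, is needed to pass from the finite-blocklength averaged constraint to a single-letter constraint on $P_{\rv{X}}$.
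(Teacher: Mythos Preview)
Your proposal works out a full converse argument from scratch---Fano's inequality, single-letterization via a time-sharing variable, then Lemma~\ref{lemma_minimum_distortion} to force $P_{\rv X}\in\set P^{\star}(D_0,B)$---whereas the paper's proof is a one-line invocation: it simply combines Theorem~1 of \cite{ahmadipour_information-theoretic_2024} (which already contains both achievability and converse for the capacity--distortion tradeoff $C \le \max_{P_{\rv X}\in\set P^{\star}} I(\rv X;\rv Y|\rv S)$) with Theorem~\ref{thm:rate_distortion_theorem_single} of the present paper for the rate bound. In effect you are re-deriving the converse content of \cite{ahmadipour_information-theoretic_2024} rather than citing it; this is sound but unnecessary given that result.

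The obstacle you flag---that the time-averaged input law arising in the capacity converse need not coincide with the capacity-achieving $P_{\rv X}^{\star}$ at which the rate--distortion functional of Theorem~\ref{thm:rate_distortion_theorem_single} is being evaluated---is a genuine subtlety, and the paper's one-line proof does not address it either. Your explicit identification of this coupling is in fact more careful than what the paper provides: the proposition is really a parametric description (indexed by $D_0$) of boundary points, and both your argument and the paper's tacitly assume the operating point lies on the capacity--distortion boundary so that $P_{\rv X}^{\star}$ is the relevant input law for the estimator problem.
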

\begin{proof}
	Combine Theorem 1 in \cite{ahmadipour_information-theoretic_2024} and Theorem \ref{thm:rate_distortion_theorem_single}, we conclude the result. 
\end{proof}

\begin{figure*}[t]
	\normalsize	
	\centering
	\begin{minipage}[t]{0.4\linewidth}
		\centering
		\includegraphics[width = 1\textwidth]{./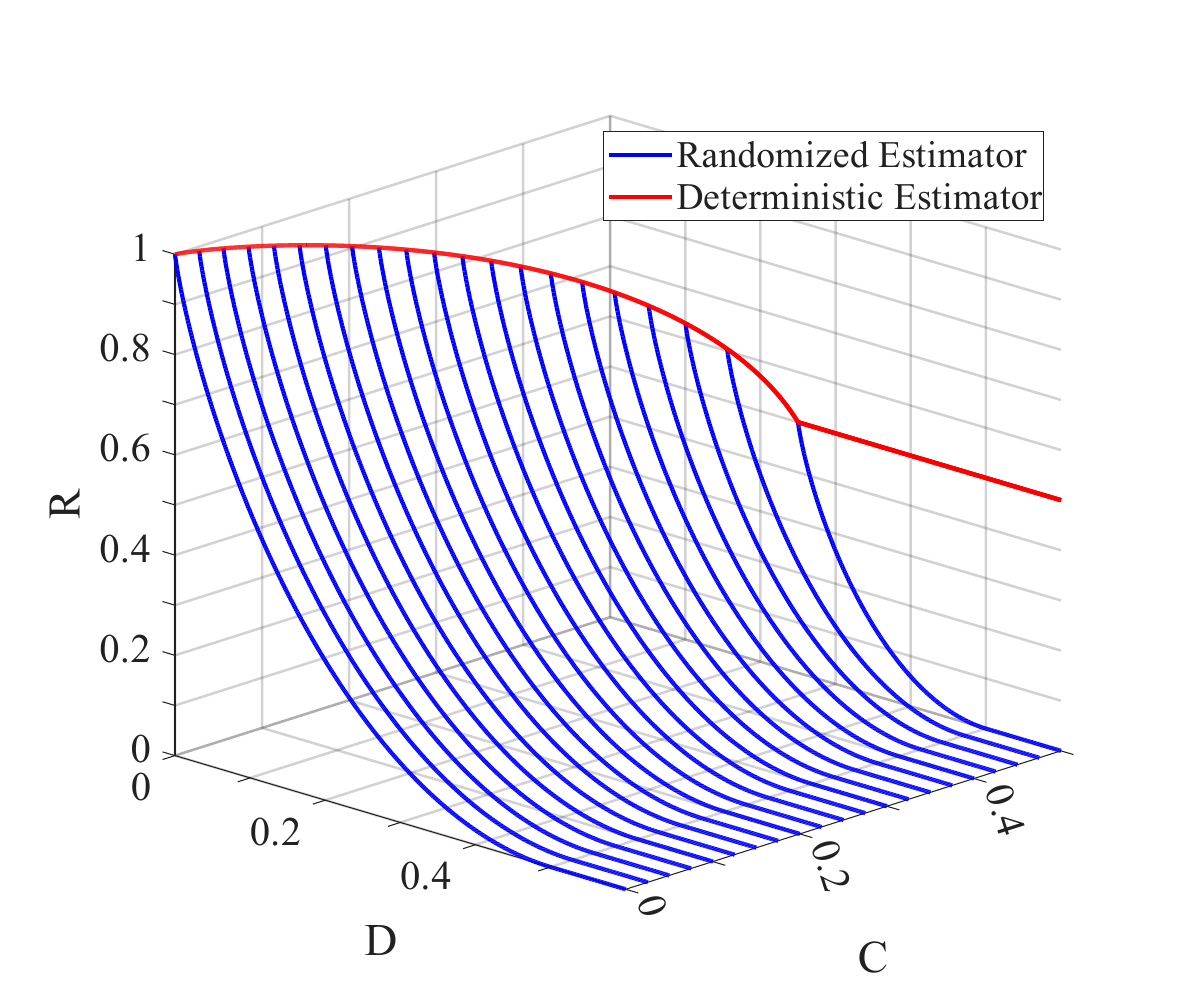}
		\caption{\ Capacity-distortion tradeoff of the binary channel with multiplicative
			bernoulli state with $q=0.5$.}
		\label{Fig:binary1}
	\end{minipage}
	\hspace{40pt}
	\begin{minipage}[t]{0.4\linewidth}
		\centering
		\includegraphics[width = 1\textwidth]{./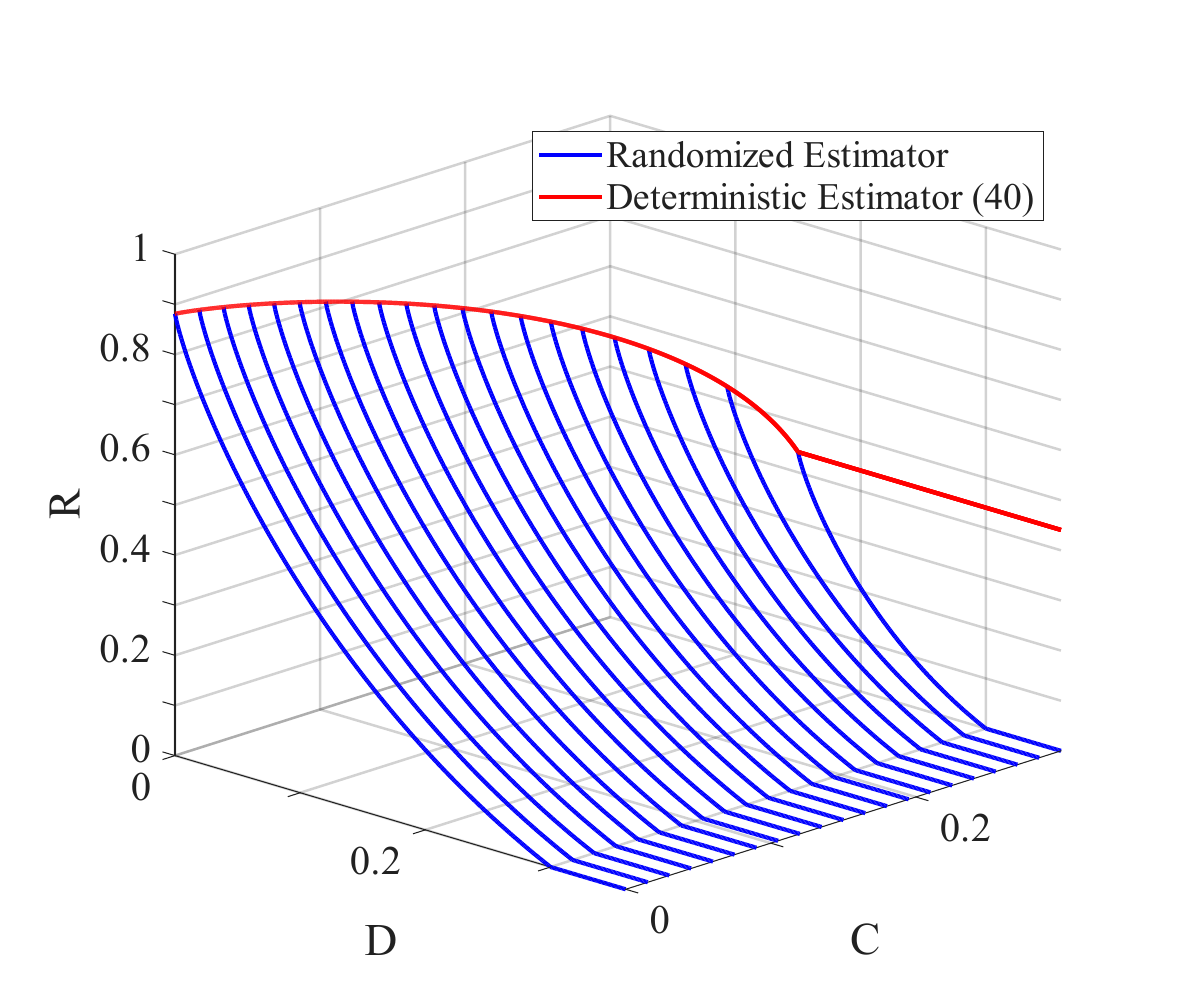}
		\caption{\ Capacity-distortion tradeoff of the binary channel with multiplicative
			bernoulli state with $q=0.3$.}
		\label{Fig:binary2}
	\end{minipage}
\end{figure*}

\begin{figure*}[t]
	\normalsize	
	\centering
	\begin{minipage}[t]{0.4\linewidth}
		\centering
		\includegraphics[width = 1\textwidth]{./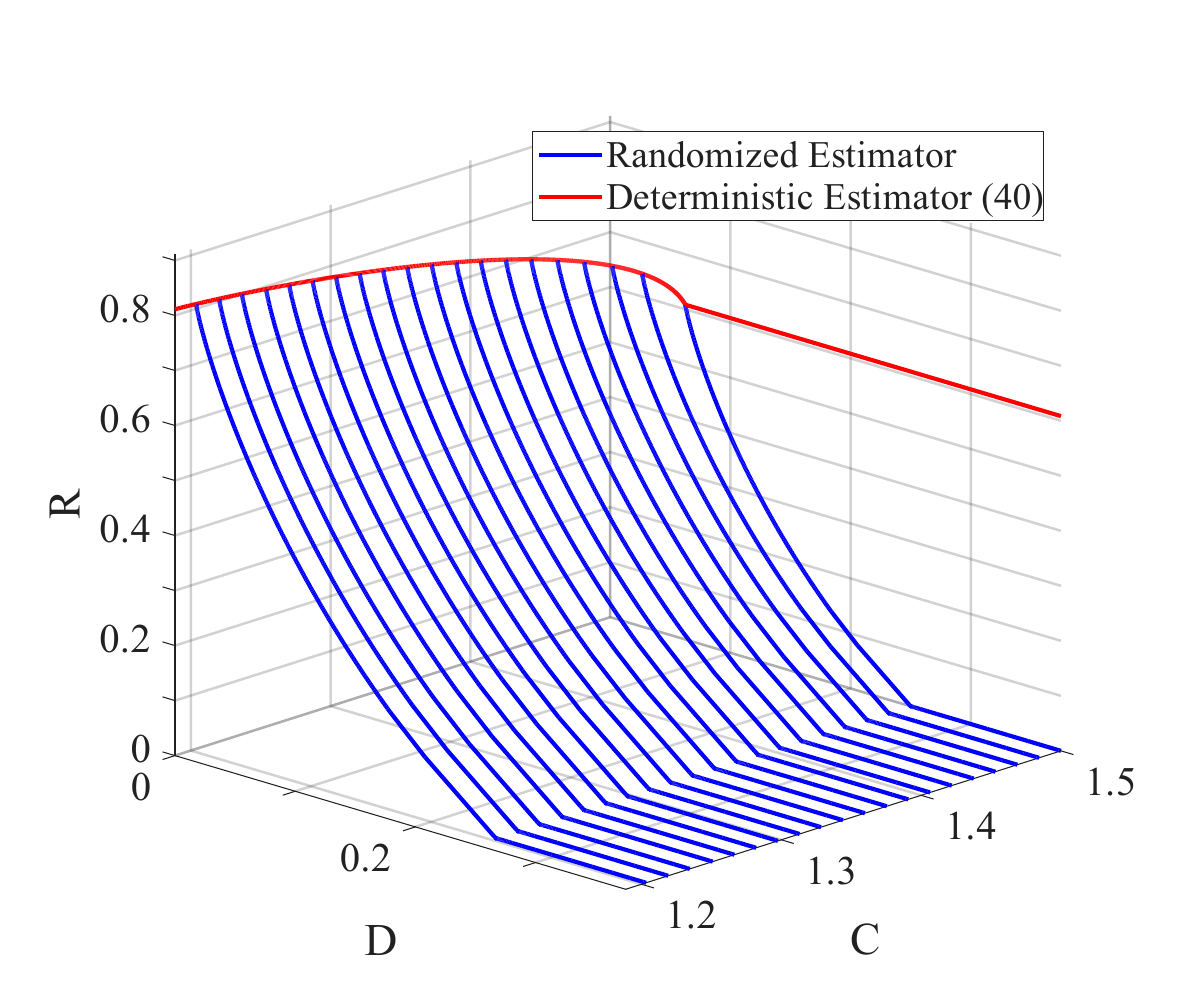}
		\caption{\ Capacity-distortion tradeoff of the DMC with multiplicative state with $P_{\rv{S}}=[1/4,1/4,1/4,1/4]$.}
		\label{Fig:DMC1}
	\end{minipage}
	\hspace{40pt}
	\begin{minipage}[t]{0.4\linewidth}
		\centering
		\includegraphics[width = 1\textwidth]{./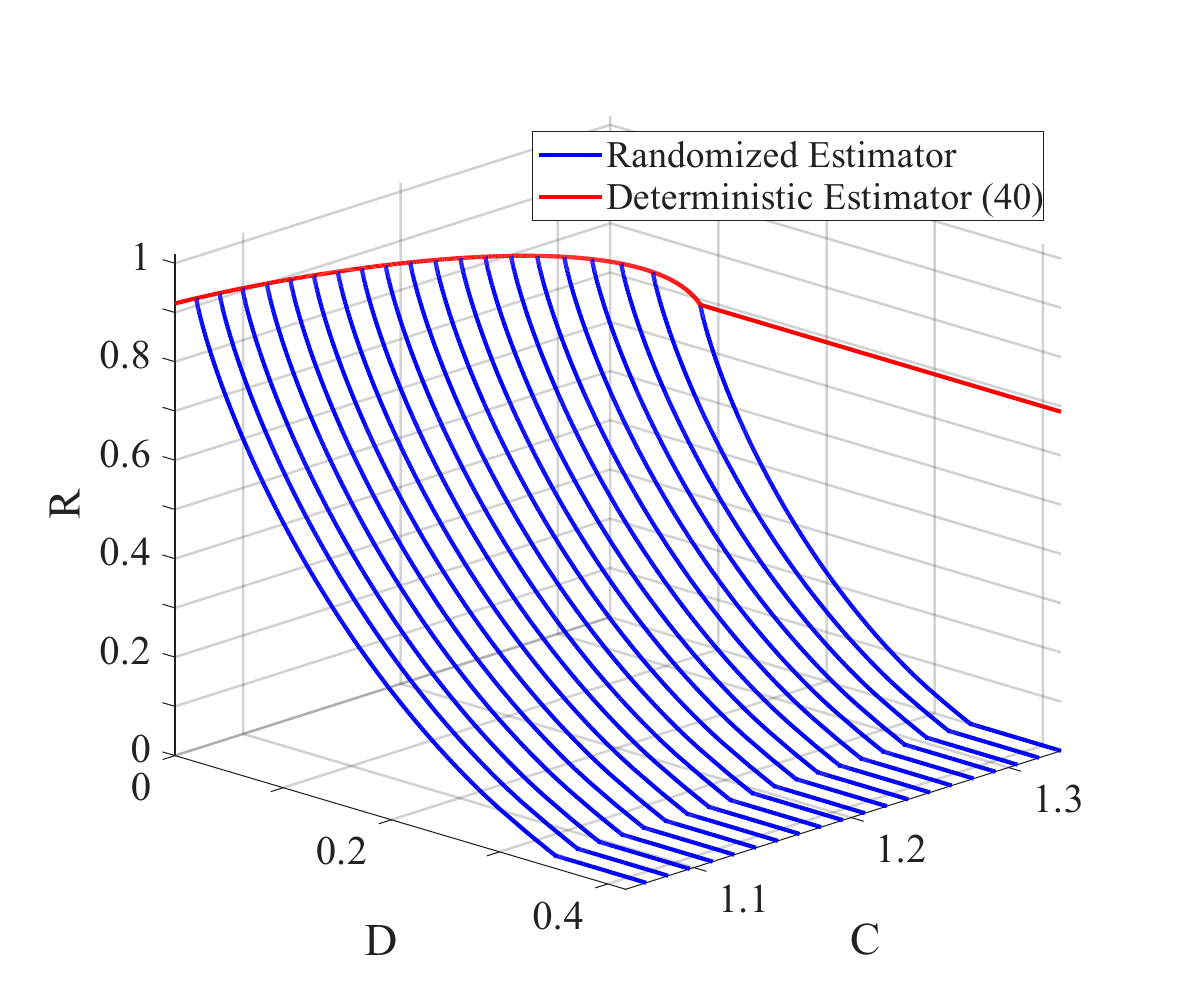}
		\caption{\ Capacity-distortion tradeoff of the DMC with multiplicative state with $P_{\rv{S}}=[1/3,1/4,1/4,1/6]$.}
		\label{Fig:DMC2}
	\end{minipage}
\end{figure*}

\section{Numerical Results}
In this section, we give some examples to intuitively demonstrate our results, including the capacity-rate-distortion region for binary and general discrete memoryless channels (DMCs) with multiplicative state, as well as the rate-distortion function in a common monostatic-downlink ISAC model.
\subsection{Binary Channels with Multiplicative Bernoulli States}
 We consider the binary channel with multiplicative Bernoulli state, \ie, the  binary alphabets $\set{X} = \set{S} = \set{Y} = \{0,1\}$ where the state $\rv{S}$ is Bernoulli-$q$, the channel $Y=SX$, and the Hamming distortion measure $\map{d}(\rv{S},\rv{Z}) = 1\left\{ \rv{S} \neq \rv{Z} \right\}$. The receiver gets the perfect feedback, \ie, $\rv{T}=\rv{Y}$.
	
	\begin{corollary} \label{corollary:binary_multiplicate_tradeoff}
		The capacity-rate-distortion region of a binary channel with multiplicative Bernoulli state is given by
		\begin{equation}
			\set{CRD} = \left\{ C, R,D \right\},
		\end{equation}
		where $p \in [0,1/2]$, $C \le q H_b(p)$, $p \min\{ q,1-q \} < D \le \min \{ q,1-q \}$ and
		\begin{equation} \label{BMC_randomized}
			R \ge (1-p) \left( H_b(q) - H_b\left( \frac{D-p \min \{ q,1-q \}}{1-p} \right) \right).
		\end{equation}
	\end{corollary}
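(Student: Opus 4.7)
The plan is to specialize the outer bounds of the Proposition in Section \ref{section:capacity-rate} to this binary multiplicative channel, parametrizing the boundary by $p := \mathbb{P}[\rv{X}=0] \in [0,1/2]$ and writing $m := \min\{q,1-q\}$ for brevity. The computation has three ingredients: the capacity term $I(\rv{X};\rv{Y}|\rv{S})$, the minimum-distortion function $\map{c}$ from Lemma \ref{lemma_minimum_distortion}, and the rate-distortion function of Theorem \ref{thm:rate_distortion_theorem_single}.

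For the capacity, I would condition on $\rv{S}$: on $\rv{S}=0$ the channel forces $\rv{Y}\equiv 0$, killing the conditional mutual information, while on $\rv{S}=1$ we have $\rv{Y}=\rv{X}$ with mutual information $H_b(p)$. Averaging over $\rv{S}\sim\Ber(q)$ gives $I(\rv{X};\rv{Y}|\rv{S}) = qH_b(p)$, and the symmetry $H_b(p)=H_b(1-p)$ makes the restriction $p\in[0,1/2]$ lose no generality. For the feasibility constraint defining $\set{P}^{\star}$, I would apply Lemma \ref{lemma_minimum_distortion}: on $\rv{X}=1$, $\rv{T}=\rv{S}$ makes $P_{\rv{S}|\rv{XT}}(\cdot|1,t)$ a point mass at $s=t$, giving $\map{c}(1)=0$; on $\rv{X}=0$, $\rv{T}\equiv 0$ and the posterior reduces to $\Ber(q)$, so the best constant estimate achieves $\map{c}(0)=m$. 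Hence $\mathbb{E}[\map{c}(\rv{X})]=pm$, which yields the feasibility requirement $D\ge pm$ and, at the capacity maximizer, the bound $C\le qH_b(p)$.

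For the rate term, Theorem \ref{thm:rate_distortion_theorem_single} decomposes as
\begin{equation*}
I(\rv{T};\rv{Z}|\rv{X}) \;=\; p\,I(\rv{T};\rv{Z}|\rv{X}=0) \;+\; (1-p)\,I(\rv{T};\rv{Z}|\rv{X}=1).
\end{equation*}
The $\rv{X}=0$ branch contributes zero since $\rv{T}$ is constant there, but its branch distortion is at least $m$, with equality for the constant estimator. On the $\rv{X}=1$ branch, $\rv{T}=\rv{S}$ reduces the problem to classical lossy coding of a $\Ber(q)$ source under Hamming distortion $d_1\in[0,m]$, whose rate-distortion function is $H_b(q)-H_b(d_1)$. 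With per-branch distortions $(d_0,d_1)$ satisfying $pd_0+(1-p)d_1\le D$ and $d_0\ge m$, the total rate equals $(1-p)(H_b(q)-H_b(d_1))$; setting $d_0=m$ maximizes $d_1$ to $(D-pm)/(1-p)$, recovering (\ref{BMC_randomized}).

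The step that requires care is justifying this distortion split. Since $H_b(q)-H_b(\cdot)$ is strictly decreasing on $[0,m]$ and the $\rv{X}=0$ branch contributes no rate regardless of its allocated distortion, the unique optimum pushes $d_0$ to its floor $m$ and transfers all slack to $d_1$. Finally I would verify that the admissible range $d_1\in(0,m]$ maps exactly onto $D\in(pm,m]$, matching the stated $D$-range and closing the argument.
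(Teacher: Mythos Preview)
Your proposal is correct and follows essentially the same route as the paper: compute $I(\rv{X};\rv{Y}|\rv{S})=qH_b(p)$ by conditioning on $\rv{S}$, then split the rate--distortion problem by conditioning on $\rv{X}$, noting that the $\rv{X}=0$ branch contributes zero rate but incurs unavoidable distortion $m=\min\{q,1-q\}$, while the $\rv{X}=1$ branch reduces to the classical Bernoulli rate--distortion function. If anything, your treatment is more careful than the paper's in explicitly invoking Lemma~\ref{lemma_minimum_distortion} for the feasibility constraint and in justifying the optimal distortion allocation via monotonicity of $H_b$ on $[0,m]$.
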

	\begin{proof}
		By setting $P_{\rv{X}}(0) = p$, the channel capacity is
		\begin{equation}
			I(\rv{X};\rv{Y}|\rv{S}) = H(\rv{T}|\rv{S}) - H(\rv{T}|\rv{XS}) = qH_b(p).
		\end{equation}
		Then, we consider the rate-distortion function. First, note that if $\rv{X}=0$, the optimal estimator chooses $\argmin_{z \in \set{Z}} \sum_{s \in \set{S}} P_{\rv{S}}(s)\map{d}(s,z)$ since we always obtain $\rv{T}=0$. Let the conditional distortion be $D_0 = \mathbb{E}[\map{d}(\rv{S},\rv{Z})|\rv{X}=0]$. After some algebra, we have $D_0 = p \min \{1,1-q\}$. Next, if $\rv{X}=1$, we have $\rv{T}=\rv{S}$, so the conditional mutual information is equal to $I(\rv{S};\rv{Z})$, and $\mathbb{E}[\map{d}(\rv{S},\rv{Z})] \le D_1$. Set $D \ge D_0+D_1$. Combining the results from both cases, we obtain that
		\begin{align}
			& I(\rv{T};\rv{Z}|\rv{X})\nonumber \\
			= & (1-p) \left( H_b(q) - H_b\left( \frac{D-p \min \{ q,1-q \}}{1-p} \right) \right),
		\end{align}
		where $p \min\{ q,1-q \} < D \le \min \{ q,1-q \} $.
		This completes the proof.
	\end{proof}
	The numerical results of Corollary \ref{corollary:binary_multiplicate_tradeoff} are illustrated in Fig. \ref{Fig:binary1} and \ref{Fig:binary2} with parameters $p=0.3$ and $p = 0.3$, respectively. The randomized estimator achieves the boundary of $\set{CRD}$ by using Corollary \ref{corollary:binary_multiplicate_tradeoff}. We also use Algorithm \ref{Algorithm:BA_SDRD} to obtain the numerical results. Unsurprisingly, they coincide with Corollary \ref{corollary:binary_multiplicate_tradeoff}. Through figures, we find that the deterministic estimator (\ref{eq:deterministic_estimator}), which achieves minimum distortion, is a special case of the entire tradeoff surface. In fact, when $P_{\rv{X}}$ or $P_{\rv{S}}$ is fixed, this tradeoff problem naturally degenerates into a two-dimensional tradeoff between the target distortion $D$ and the rate-distortion function $R(D)$ (see Theorem \ref{thm:rate_distortion_theorem_single}) or the channel capacity $C(D)$ (see \cite[Section II]{ahmadipour_information-theoretic_2024}), respectively. 
	
\subsection{General DMCs with Multiplicative States}	
	Then, we consider a general DMC with a uniform multiplicative state. We also use the channel $\rv{Y} = \rv{S} \rv{X}$ and $\rv{T} = \rv{S} \rv{X}$, where $\set{X} = \set{S} = \{ 0,1,2,3 \}$. Let $\rv{S} \sim P_{\rv{S}}$ be a discrete memoryless source with Hamming distortion. The optimal input distribution is unique. We apply \cite[Theorem 4]{ahmadipour_information-theoretic_2024} to solve for the input distribution $P_{\rv{X}}$. Then, for such a $P_{\rv{X}}$, we use Algorithm \ref{Algorithm:BA_SDRD} to find the optimal estimator. The numerical results of the boundary of $\mathcal{CRD}$ are illustrated in Fig. \ref{Fig:DMC1} and \ref{Fig:DMC2}. Moreover, we once again observe that the randomized estimator obtained by Algorithm \ref{Algorithm:BA_SDRD} coincides with the deterministic estimator that achieves the minimum distortion at the minimum target distortion. 

\subsection{Real Gaussian Channels with Sensing Echo Signals Reflected From the Target}

We consider a monostatic-downlink ISAC model \cite{li_computation_2025}: the communication channel is an additive white Gaussian noise channel, and the sensing signal is an echo signal reflected from the target, as defined below.

\begin{align}
	\rv{Y}_i & = \rv{X}_i + \rv{N}_i, \\
	\rv{T}_i & = \rv{S}_i \rv{X}_i + \rv{N}_i, \label{eq:sensing_gaussian_channel}
\end{align}
where both sequence $\{\rv{N}_i\}$ and $\{\rv{S}_i\}$ are independent of each other and i.i.d. Gaussian with zero mean and variance $\sigma_\rv{N}^2$ and $\sigma_{\rv{S}}^2$, respectively; $\rv{X}_i$ is the channel input satisfying $\lim \limits_{n \rightarrow \infty} \frac{1}{n} \sum_{i=1}^{n} \mathbb{E}[\rv{X}_i^2] \le P$. The distortion measure is the quadratic distortion measure $\map{d}(s,z) = (s-z)^2$. 

Fix $\rv{X}^n =x^n$ is a deterministic waveform (uncoded), the rate distortion function can be expressed as (\eg, see \cite{dobrushin_information_1962} and \cite{wolf_transmission_1970})
\begin{equation} \label{eq:Det_rate-distortion}
	R_{x}(D) = 
	\begin{cases}
		\frac{1}{2} \log \left( \frac{x^4\sigma_{\rv{S}}^4}{D(x^2 \sigma_{\rv{S}}^2 + \sigma_{\rv{N}}^2) - x^2\sigma_{\rv{S}}^2\sigma_{\rv{N}}^2} \right), & \, D\in \set{D}, \\
		0, & \, o.w,
	\end{cases}
\end{equation}
where $\set{D} = \big (\frac{x^2\sigma_{\rv{S}}^2\sigma_{\rv{N}}^2}{x^2\sigma_{\rv{S}}^2 + \sigma_{\rv{N}}^2}, x^2\sigma_{\rv{S}}^2 \big ]$. Then, if $\rv{X}^n$  is drawn i.i.d. from $P_{\rv{X}}$ and $P_{\rv{X}}$ is a uniform distribution, we have
\begin{equation} \label{eq:Gaussian_rate-distortion}
	R(D) = \mathbb{E}[R_{\rv{X}}(D)],
\end{equation}
where $D \ge \max_{x \in \set{X}} \frac{x^2\sigma_{\rv{S}}^2\sigma_{\rv{N}}^2}{x^2\sigma_{\rv{S}}^2 + \sigma_{\rv{N}}^2}$.

We consider the deterministic and $16$-ary PAM transmitted waveform with parameters $\sigma_\rv{N}^2 = \sigma_{\rv{S}}^2 = 1$, $10 \log_{10} P$ taking $0$ and $10$ dB. For the deterministic transmitted waveform, let $\rv{X}^n = (\sqrt{P}, ..., \sqrt{P})$ to satisfy the power constraint. Let $P_{\rv{X}}$ be uniform on alphabet $\set{X}$, and the $M = 16$-ary PAM constellation is quantized to 
\begin{equation}
	\set{X} = \{ (2m-1-M)k,m=1,...,M \},
\end{equation}
where we let $k = \sqrt{3P/(M^2-1)}$ to satisfy the power constraint. The resulting rate-distortion curve is shown in Fig. \ref{fig_gaussian}. We can see that deterministic waveforms require a lower estimation rate to achieve the same expected distortion compared to 16-ary PAM waveforms at $0$ dB power constraint. However, if power constraint is $10$ dB, 16-ary PAM waveforms enable the estimator to exhibit better performance when $D \le 6$.

 \begin{figure}[t]
	\normalsize
	\includegraphics[width = 0.4\textwidth]{./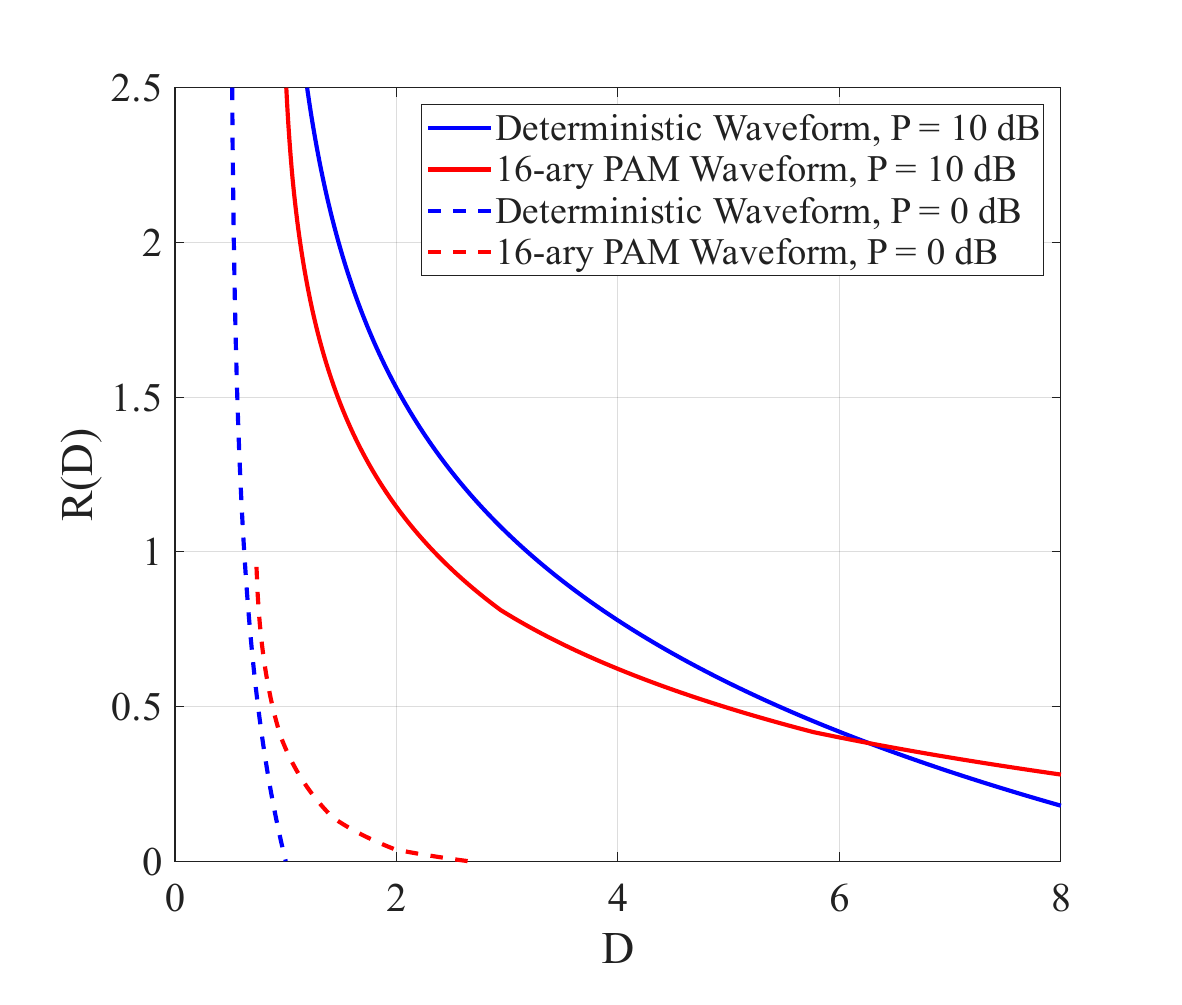}
	\caption{\ Rate-distortion function of real Gaussian channels: Deterministic waveform and $16$-ary PAM waveform.}
	\label{fig_gaussian}
\end{figure}

\begin{remark}
	It should be noted that in \cite{bell_information_1993} and \cite{yang_mimo_2007}, the authors maximize the conditional mutual information $I(\rv{S};\rv{T}|\rv{X})$ to design the deterministic waveforms. For the sensing model of (\ref{eq:sensing_gaussian_channel}), we have $I(\rv{S};\rv{T}|\rv{X}) = \mathbb{E}[\frac{1}{2} \log \left( 1+\rv{X}^2\sigma_{\rv{S}}^2/\sigma_{\rv{N}}^2 \right)] \le \log (1+ \sigma_{\rv{S}}^2/\sigma_{\rv{N}}^2 \mathbb{E}[\rv{X}^2]) = \log (1+ \sigma_{\rv{S}}^2/\sigma_{\rv{N}}^2 P)$ by the concavity of $\log(1+x)$ and Jensen's inequality, with equality if and only if $\rv{X} = \sqrt{P}$ is a constant or $P_{\rv{X}}(-\sqrt{P}) = P_{\rv{X}}(\sqrt{P})=0.5$. In numerical results of Fig. \ref{fig_gaussian}, we use $\rv{X} = \sqrt{P}$. We note that at high power constraint (\eg, $10$ dB), the $16$-ary PAM waveform gives a lower estimation rate compared to the deterministic waveform under the same expected distortion constraint when $D \le 6$. This indicates that the performance of the estimator can be improved through the coding of the transmitter, especially when compared with the waveform which maximizes $I(\rv{S};\rv{T}|\rv{X})$.
\end{remark}
\section{Conclusion and Discussion}\label{Section_conclusion}
	
In summary, we introduced an estimator rate-distortion function for the ISAC model, which jointly optimizes wireless communication and sensing tasks. This function, which characterizes the minimum rate required to achieve a given estimation distortion, has a clear operational meaning and provides information-theoretic insights into waveform design. In addition, we proposed an improved Blahut-Arimoto type algorithm, which is proven to converge to the rate-distortion function in this paper and is used to find the probability distribution of the optimal estimator. Finally, we establish a capacity-rate-distortion region that describes the achievable tradeoff between communication capacity and estimation rate, enabling us to balance communication and sensing performance in a collaborative framework.

\appendices

\section{Blahut-Arimoto Iterative algorithm} \label{Appendix_BA_algorithm}

\subsection{Proof of Theorem \ref{thm:alternating_minimization}}
On the one hand, note that
\begin{align}
	I(\rv{T};\rv{Z}|\rv{X}) 
	& = D(P_{\rv{Z}|\rv{XT}}\| P_{\rv{Z}|\rv{X}}|P_{\rv{XT}}) \label{Appendix_BA_proof_1} \\
	& = \mathbb{E}\left[ \mathbb{E}\left[\log \frac{P_{\rv{Z}|\rv{XT}}}{P_{\rv{Z}|\rv{X}} } \Big | \rv{X},\rv{T}\right] \right] \label{Appendix_BA_proof_2}  \\
	& = \mathbb{E}\left[ \mathbb{E}\left[\log \frac{P_{\rv{Z}|\rv{XT}}Q_{\rv{Z}|\rv{X}}}{P_{\rv{Z}|\rv{X}} Q_{\rv{Z}|\rv{X}}} \Big | \rv{X},\rv{T}\right] \right] \\
	& = \mathbb{E}\left[ \mathbb{E}\left[\log \frac{P_{\rv{Z}|\rv{XT}}}{Q_{\rv{Z}|\rv{X}}} \Big | \rv{X},\rv{T}\right] \right]  - \mathbb{E}\left[\log \frac{P_{\rv{Z}|\rv{X}}}{Q_{\rv{Z}|\rv{X}}} \right] \\
	& = D(P_{\rv{Z}|\rv{XT}} \| Q_{\rv{Z}|\rv{X}} |P_{\rv{XT}}) - D(P_{\rv{Z}|\rv{X}}\|Q_{\rv{Z}|\rv{X}}|P_{\rv{X}})  \\
	& \le D(P_{\rv{Z}|\rv{XT}} \| Q_{\rv{Z}|\rv{X}} |P_{\rv{XT}}) \label{Appendix_BA_proof_3}
\end{align}
where the expectation is with respect to $P_{\rv{XTZ}} = P_{\rv{X}} P_{\rv{T}|\rv{X}} P_{\rv{Z}|\rv{XT}}$, (\ref{Appendix_BA_proof_1}) follows from the definition of KL divergence, (\ref{Appendix_BA_proof_2}) we condition on $(\rv{X},\rv{T})$, (\ref{Appendix_BA_proof_3}) follows from the non-negativity of KL divergence.

On the other hand, define
\begin{align}
	&F_\mu(P_{\rv{Z}|\rv{XT}}, Q_{\rv{Z}|\rv{X}} ) = D(P_{\rv{Z}|\rv{XT}} \| Q_{\rv{Z}|\rv{X}} |P_{\rv{XT}}) - \mu \mathbb{E}[\map{d}(\rv{S},\rv{Z})]
	\nonumber \\
	&= \sum_{x \in \set{X}} \sum_{t \in \set{T}} \sum_{z \in \set{Z}}  P_{\rv{XT}}(x,t) P_{\rv{Z}|\rv{XT}}(z|x,t) \log \frac{P_{\rv{Z}|\rv{XT}}(z|x,t)}{Q_{\rv{Z}|\rv{X}}(z|x)} \nonumber \\
	& \ \ \ - \mu \sum_{x \in \set{X}}  \sum_{s \in \set{S}}  \sum_{t \in \set{T}} \sum_{z \in \set{Z}}  P_{\rv{XST}}(x,s,t) P_{\rv{Z}|\rv{XT}}(z|x,t) \map{d}(s,z). \label{eq:F_mu_definition}
\end{align}
We obtain that 
\begin{equation}
	L_\mu = \inf\limits_{P_{\rv{Z}|\rv{XT}}, Q_{\rv{Z}|\rv{X}}} F_\mu(P_{\rv{Z}|\rv{XT}}, Q_{\rv{Z}|\rv{X}} ).
\end{equation}

To prove (\ref{eq:minimization_Q}), we only need to note that (\ref{Appendix_BA_proof_3}) is equality if $Q_{\rv{Z}|\rv{X}} = P_{\rv{Z}|\rv{X}}$.

Then, we prove (\ref{eq:minimization_P}). The mutual information $I(\rv{T};\rv{Z}|\rv{X})$ is a convex function of $P_{\rv{Z}|\rv{XT}}$ for fixed $P_{\rv{X}}$ and $P_{\rv{S}}$ by using the same argument in the proof of \cite[Theorem 2.7.4]{cover_elements_nodate}. We consider the Lagrange multiplier to constrain $\sum_{z \in \set{Z} } P_{\rv{Z}|\rv{XT}}(z|x,t)=1 $ and obtain that
\begin{align}
	&\frac{\partial}{\partial P_{\rv{Z}|\rv{XT}}(z|x,t)} \Bigg [ F_\mu(P_{\rv{Z}|\rv{XT}},Q_{\rv{Z}|\rv{X}}) \nonumber \\
	& \ \ \ \ \ \ \ \ \ \ \ + \sum_{x \in \set{X}} \sum_{t \in \set{T}} \lambda(x,t) \sum_{z \in \set{Z}}P_{\rv{Z}|\rv{XT}}(z|x,t) \Bigg ] = 0;
\end{align}
thus
\begin{align}
	P_{\rv{Z}|\rv{XT}} = Q_{\rv{Z}|\rv{X}} \exp \left( \mu \mathbb{E}[\map{d}(\rv{S},z)|x,t] - 1 - \frac{\lambda}{P_{\rv{XT}}(x,t)} \right).
\end{align}
Since the sum of $P_{\rv{Z}|\rv{XT}}$ with respect to $z \in \set{Z}$ is $1$, we conclude the result.

\subsection{Proof of Lemma \ref{thm:converge_to_rate_distortion_function}}

Clearly, $
F_\mu\left( P^{(1)}, Q^{(0)} \right) \ge F_\mu\left( P^{(1)}, Q^{(1)} \right) \ge
F_\mu\left( P^{(2)}, Q^{(1)} \right) \ge \cdots
$. Define the backward probability as follows:
\begin{equation}
	P_{\rv{T}|\rv{XZ}} = \frac{P_{\rv{Z}|\rv{XT}}P_{\rv{T}|\rv{X}}}{Q_{\rv{Z}|\rv{X}}}.
\end{equation}
We have $D\left( P_{\rv{T}|\rv{XZ}}(\cdot|x,z)\big \|P^{(k+1)}_{\rv{T}|\rv{XZ}}(\cdot|x,z)\right) \ge 0$, where $P^{(k+1)}_{\rv{T}|\rv{XZ}}$ is defined as $P_{\rv{T}|\rv{XZ}}$ with $P^{(k+1)}$ and $Q^{(k+1)}$ playing the role of $P_{\rv{Z}|\rv{XT}}$ and $Q_{\rv{Z}|\rv{X}}$, respectively.

For $Q^{(k)}$, arbitrary $P_{\rv{Z}|\rv{XT}}$ and $Q_{\rv{Z}|\rv{X}}$, we notice that 
\begin{align}
	& F_\mu\left(P_{\rv{Z}|\rv{XT}},Q^{(k)} \right) \nonumber \\
	& \ \ \ = F_\mu\left(P_{\rv{Z}|\rv{XT}}, Q_{\rv{Z}|\rv{X}} \right) + D\left( Q_{\rv{Z}|\rv{X}} \big \| Q^{(k)} \big | P_{\rv{X}} \right). \label{Appendix_BA_converge_proof_8}
\end{align}
For $Q^{(k)}$, $P^{(k+1)} = P(Q^{(k)})$, and arbitrary $P_{\rv{Z}|\rv{XT}}$ we consider
\begin{align}
	& F_\mu\left( P_{\rv{Z}|\rv{XT}}, Q^{(k)} \right) \nonumber \\
	= & \sum_{x \in \set{X}} \sum_{t \in \set{T}} \sum_{z \in \set{Z}}  P_{\rv{XT}}(x,t) P_{\rv{Z}|\rv{XT}} \log \frac{P^{(k+1)} (z|x,t)}{Q^{(k)}(z|x)  } \nonumber \\
	&  + D \left( P_{\rv{Z}|\rv{XT}} \big \| P^{(k+1)} \big | P_{\rv{XT}} \right) - \mu \mathbb{E}[\map{d}(\rv{S},\rv{Z})], \label{Appendix_BA_converge_proof_2}
\end{align}
where the expectation is with respect to $P_{\rv{S}} P_{\rv{X}} P_{\rv{T}|\rv{XS}} P_{\rv{Z}|\rv{XT}}$. 
\begin{figure*}[b]
	\normalsize
	\hrulefill
	\begin{align}
		& \sum_{x \in \set{X}} \sum_{t \in \set{T}} \sum_{z \in \set{Z}}  P_{\rv{XT}}(x,t) P_{\rv{Z}|\rv{XT}}(z|x,t) \log \frac{P^{(k+1)} (z|x,t)}{Q^{(k)}(z|x)  } -  \mu \mathbb{E} [\map{d}(\rv{S},\rv{Z})] \nonumber \\
		= & \sum_{x \in \set{X}} \sum_{t \in \set{T}} \sum_{z \in \set{Z}} P_{\rv{XT}}(x,t)  P_{\rv{Z}|\rv{XT}}(z|x,t) \left(  \log \frac{\exp \left( \mu \mathbb{E}[\map{d}(\rv{S},z)|x,t] \right)}{\sum_{a \in \set{Z}}Q^{(k)}(a|x)\exp \left( \mu \mathbb{E}[\map{d}(\rv{S},a)|x,t] \right)} - \mu \sum_{s \in \set{S}} P_{\rv{S}|\rv{XT}}(s|x,t) \map{d}(s,z)\right) \label{Appendix_BA_converge_proof_3} \\
		= & \sum_{x \in \set{X}} \sum_{t \in \set{T}}  P_{\rv{XT}}(x,t)  \log \left( \sum_{a \in \set{Z}}Q^{(k)}(a|x)\exp \left( \mu \mathbb{E}[\map{d}(\rv{S},a)|x,t] \right) \right) \sum_{z \in \set{Z}} P_{\rv{Z}|\rv{XT}}(z|x,t)  \label{Appendix_BA_converge_proof_4} \\
		= & \sum_{x \in \set{X}} \sum_{t \in \set{T}}  P_{\rv{XT}}(x,t)  \log \left( \sum_{a \in \set{Z}}Q^{(k)}(a|x)\exp \left( \mu \mathbb{E}[\map{d}(\rv{S},a)|x,t] \right) \right) \label{Appendix_BA_converge_proof_5}  \\
		= & F_\mu\left( P^{(k+1)}, Q^{(k)} \right) \label{Appendix_BA_converge_proof_6} 
	\end{align}
\end{figure*}
Then, we obtain (\ref{Appendix_BA_converge_proof_3})-(\ref{Appendix_BA_converge_proof_5}) shown at the bottom of this page, where
\begin{enumerate}
	\item[$\bullet$] (\ref{Appendix_BA_converge_proof_3}) invokes (\ref{eq:alternating_P});
	\item[$\bullet$] (\ref{Appendix_BA_converge_proof_4}) follows from expanding the conditional expectation $\mathbb{E}[\map{d}(\rv{S},a)|x,t]$;
	\item[$\bullet$] (\ref{Appendix_BA_converge_proof_5}) holds because $\sum_{z \in \set{Z}} P^{(k)}(z|x,t) =1$.
	\item[$\bullet$] (\ref{Appendix_BA_converge_proof_6}) follows from using the same argument as (\ref{Appendix_BA_converge_proof_3})-(\ref{Appendix_BA_converge_proof_5}) to $F_\mu\left( P^{(k+1)}, Q^{(k)} \right)$. 
\end{enumerate}
Juxtaposing (\ref{Appendix_BA_converge_proof_2}) and (\ref{Appendix_BA_converge_proof_6}) to obtain that
\begin{align}
	&F_\mu\left( P_{\rv{Z}|\rv{XT}}, Q^{(k)} \right) \nonumber \\
	& \ \ \ = F_\mu\left( P^{(k+1)}, Q^{(k)} \right) + D \left( P_{\rv{Z}|\rv{XT}} \big \| P^{(k+1)} \big | P_{\rv{XT}} \right). \label{Appendix_BA_converge_proof_9}
\end{align}
Then, collecting (\ref{Appendix_BA_converge_proof_8}) and (\ref{Appendix_BA_converge_proof_9}), we have the identity
\begin{align}
	& D\left( Q_{\rv{Z}|\rv{X}} \big \| Q^{(k)}\big |P_{\rv{X}}\right) - D\left( Q_{\rv{Z}|\rv{X}} \big \| Q^{(k+1)}\big |P_{\rv{X}}\right) \nonumber \\
	&\ \  = 
	F_\mu \left( P^{(k+1)}, Q^{(k)} \right) - F_\mu \left( P_{\rv{Z}|\rv{XT}}, Q_{\rv{Z}|\rv{X}} \right) \nonumber \\
	&\ \ \ \ \ + \mathbb{E}\left[D\left( P_{\rv{T}|\rv{XZ}}(\cdot|x,z)\big \|P^{(k+1)}_{\rv{T}|\rv{XZ}}(\cdot|x,z)\right)\right]. \label{Appendix_BA_converge_proof_10}
\end{align}
Suppose
\begin{equation}
	F_\mu \left( P_{\rv{Z}|\rv{XT}}, Q_{\rv{Z}|\rv{X}} \right) \le \lim \limits_{n \rightarrow \infty} F_\mu \left( P^{(k+1)}, Q^{(k)} \right),
\end{equation}
where $Q_{\rv{Z}|\rv{X}} = Q(P_{\rv{Z}|\rv{XT}})$.
For any $N > M \ge 1$, (\ref{Appendix_BA_converge_proof_10}) implies
\begin{align}
	0 & \le \sum_{k=M}^{N-1} \left[F_\mu \left( P^{(k+1)}, Q^{(k)} \right) - F_\mu \left( P_{\rv{Z}|\rv{XT}}, Q_{\rv{Z}|\rv{X}} \right) \right] \\
	& \le \sum_{k=M}^{N-1} 	\left[ D\left( Q_{\rv{Z}|\rv{X}} \big \| Q^{(k)}\big |P_{\rv{X}}\right) - D\left( Q_{\rv{Z}|\rv{X}} \big \| Q^{(k+1)}\big |P_{\rv{X}}\right) \right] \\
	& = D\left( Q_{\rv{Z}|\rv{X}} \big \| Q^{(M)}\big |P_{\rv{X}}\right) - D\left( Q_{\rv{Z}|\rv{X}} \big \| Q^{(N)}\big |P_{\rv{X}}\right). \label{Appendix_BA_converge_proof_12}
\end{align}
By inspecting the recursive relations (\ref{eq:alternating_P}) and (\ref{eq:alternating_Q}), we obtain that each term in the RHS of (\ref{Appendix_BA_converge_proof_12}) is bounded. 
Therefore, we find that
\begin{equation}
	\lim \limits_{n \rightarrow \infty } F_\mu \left( P^{(k+1)}, Q^{(k)} \right) = \inf \limits_{P_{\rv{Z}|\rv{XT}}, Q_{\rv{Z}|\rv{X}}} F_\mu \left( P_{\rv{Z}|\rv{XT}}, Q_{\rv{Z}|\rv{X}} \right) = L_\mu. \label{Appendix_BA_converge_proof_13}
\end{equation}

By the Bolzano-Weierstrass Theorem, the sequence $Q^{(k)}$ has a limit $Q^{\star}$ and a subsequence $Q^{(k_i)} \rightarrow Q^{\star}$. Then we have $P^{(k_i+1)} \rightarrow P(Q^{\star}) = P^{\star}$. Substituting this into (\ref{Appendix_BA_converge_proof_13}) and noting that $Q_{\rv{Z}|\rv{X}}$ is obtained by $Q(P_{\rv{Z}|\rv{XT}})$, we have $Q^{\star} = Q(P^{\star})$ and $ F_\mu \left( P^{\star}, Q^{\star} \right) = L_\mu$ as desired.

Additionally, by setting $N = M + 1$ in the RHS of (\ref{Appendix_BA_converge_proof_12}), we obtain that the sequence $D\left( Q^{\star} \big \| Q^{(k)}\big |P_{\rv{X}}\right)$ is monotonic and non-increasing. Furthermore, since $Q^{(k_i)} \rightarrow Q^{\star}$ implies $D\left( Q^{\star} \big \| Q^{(k_i)}\big |P_{\rv{X}}\right) \rightarrow 0$, it follows directly that $D\left( Q^{\star} \big \| Q^{(k)}\big |P_{\rv{X}}\right) \rightarrow 0$. Therefore, $Q^{(k)} \rightarrow Q^{\star}$. This completes the proof.

\printbibliography[heading=bibintoc, title={References}]

%% part of metric entropy.

\begin{IEEEbiographynophoto}{Lugaoze Feng}
	received the B.S. degree from XiDian University, Xian, China, in 2023. He is currently pursuing the Ph.D. degree in communication and information system with Peking University, Beijing. His research interests include information theory and channel coding.
\end{IEEEbiographynophoto}

\begin{IEEEbiographynophoto}{Guocheng Lv}
	received the B.S. degree from Peking University, Beijing, China, in 2006, and the M.S. degree from Peking University, Beijing, China, in 2009. He is currently a Senior Engineer with the School of Electronics, Peking University. His research interests include satellite communication, physical layer modem and non-orthogonal multiple access.
\end{IEEEbiographynophoto}

\begin{IEEEbiographynophoto}{Xunan Li}
	received the B.S. degree in Telecommunications Engineering from Nankai University, Tianjin, China, in 2013, and the Ph.D. degree in Communications and Information System from Peking University, Beijing, China, in 2018. His research interests include communication signal processing and Satellite Communications.
\end{IEEEbiographynophoto}

\begin{IEEEbiographynophoto}{Ye Jin}
	received the B.E. and M.S. degrees from Peking University, Beijing, China, in 1986 and 1989, respectively. He is currently a Professor with the Institute of Modern Communications, Peking University. He has been the Principal Investigator of over 30 funded research projects. His general research interests are in the areas of satellite and wireless communications and networking. Prof. Jin was a recipient of the First Prize of the National Science and Technology Progress Awards of China.
\end{IEEEbiographynophoto}

%\end{CJK}
\newpage

\end{document}